\definecolor{mylinkcolor}{RGB}{0,0,140}
\crefname{figure}{Figure}{Figures}
\newcommand{\xhdr}[1]{\vspace{0.5mm}\noindent{{\bf #1.}}\hspace{0.5mm}}
\newcommand{\xhdrq}[1]{\vspace{0.5mm}\noindent{{\bf #1?}}\hspace{0.5mm}}
\newcommand{\phantomsubfigure}[1]{\begin{subfigure}[b]{0.1\textwidth}\phantomcaption\label{#1}\end{subfigure}}
\newcommand{\proxyX}{\hat{X}}
\newcommand{\proxyXparam}[1]{\hat{X}_{#1}}
\newcommand{\proxyY}{\hat{Y}}
\newcommand{\proxyYparam}[1]{\hat{Y}_{#1}}
\newcommand{\proxyZparam}[1]{Z_{#1}}
\newcommand{\prob}[1]{\textnormal{Pr}[#1]}
\newcommand{\expec}[1]{\mathbb{E}[#1]}
\newcommand{\var}[1]{\mathbb{V}[#1]}
\newcommand{\snr}[1]{\textnormal{SNR}[#1]}
\newcommand{\given}{\;\vert\;}
\newcommand{\edgetrain}{E^{\textnormal{train}}}
\newcommand{\edgetest}{E^{\textnormal{test}}}
\newcommand{\graphtrain}{G^{\textnormal{train}}}
\newcommand{\edgetrainparam}[1]{E_{#1}^{\textnormal{train}}}
\newcommand{\nodeparam}[1]{V_{#1}}
\begin{document}

\title{Link Prediction in Networks with Core-Fringe Data}

\author{Austin R.~Benson}
\affiliation{%
  \institution{Cornell University}
}
\email{arb@cs.cornell.edu}

\author{Jon Kleinberg}
\affiliation{%
  \institution{Cornell University}
}
\email{kleinber@cs.cornell.edu}


\begin{abstract}
Data collection often involves the 
partial measurement of a larger system. 
A common example arises in collecting network data: 
we often obtain network datasets by
recording all of the interactions among a small set of core nodes, so that we
end up with a measurement of the network consisting of these core nodes 
along with a potentially much larger set of fringe nodes that have
links to the core. Given the ubiquity of this process for assembling
network data, it is crucial to understand the role of such a
``core-fringe'' structure.

Here we study how the inclusion of fringe nodes affects the standard
task of network link prediction. 
One might initially think the inclusion of any additional data is useful, and hence that
it should be beneficial to include all fringe nodes that are available. However,
we find that this is not true; in fact, there is substantial variability in
the value of the fringe nodes for prediction. Once an algorithm is selected,
in some datasets, including any additional data from the fringe can
actually hurt prediction performance; in other datasets, including some amount
of fringe information is useful before prediction performance saturates or even
declines; and in further cases, including the entire fringe leads to the best performance.
While such variety might seem surprising, we show
that these behaviors are exhibited by simple random graph models.
\end{abstract}

\maketitle

\section{Introduction}

In a wide range of data analysis problems, the underlying data 
typically comes from partial measurement of a larger system.
This is a ubiquitous issue in the study of networks, where 
the network we are analyzing is almost always embedded in some larger
surrounding network 
\cite{Kim-2011-completion,Kossinets-2006-missing,Laumann-1989-boundary}.
Such considerations apply to systems at all scales.
For example, when studying the communication network of an organization,
we can potentially gain additional information
if we know the structure of employee interactions
with people outside the organization as well~\cite{Romero-2016-stress}.
A similar issue applies to large-scale systems.
If we are analyzing the links within a large online social network,
or the call traffic data from a large telecommunications provider,
we could benefit from knowing the interactions that members
of these systems have with individuals who are not part of the platform,
or who do not receive service from the provider.

\begin{figure}
\includegraphics[width=0.35\columnwidth]{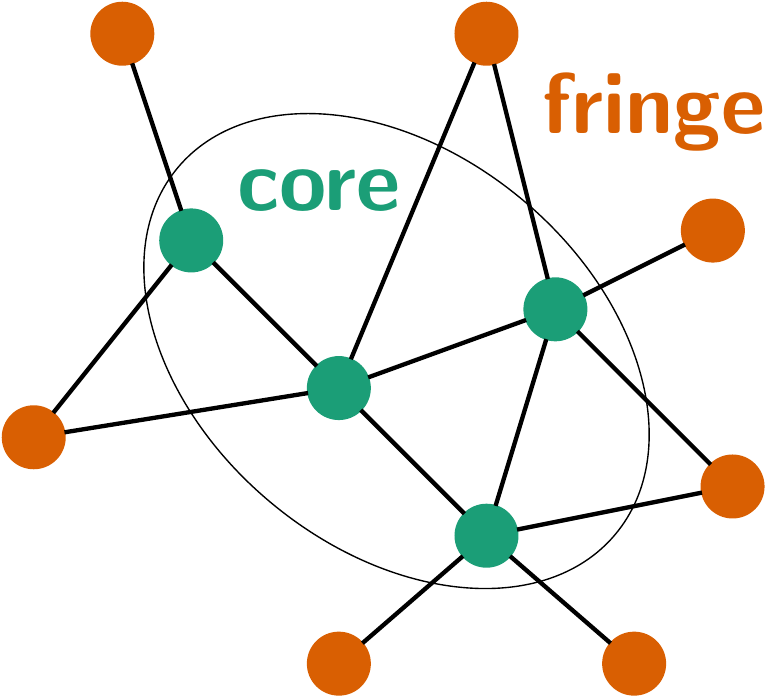}
\hskip30pt
\includegraphics[width=0.35\columnwidth]{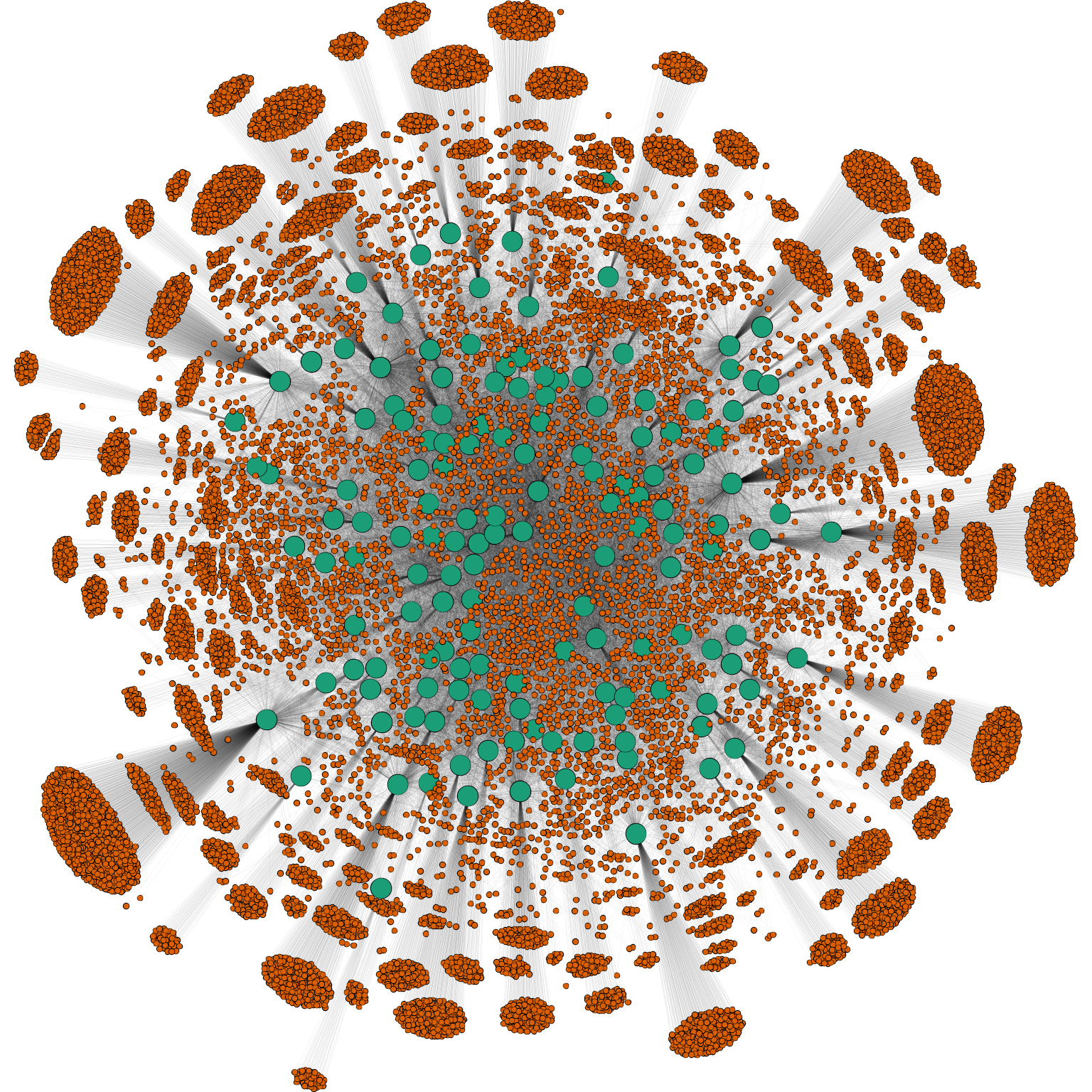}
\caption{Core-fringe structure.
{\bf (Left)} 
Illustrative network with labeled core-fringe structure (core nodes in green and fringe nodes in orange). 
We observe all of the links involving core nodes (in green). 
Each edge is between two core nodes or between one core and one fringe node.
{\bf (Right)}
Core-fringe structure in the Enron email network, which results from data 
collection---the core nodes in green correspond to the roughly 150 accounts whose
emails were released as part of a federal investigation \cite{Klimt-2004-Enron}.
Here, the number of fringe nodes is orders of magnitude larger than the number of core nodes.
}
\label{fig:cf}
\end{figure}

Network data can therefore be viewed as having a {\em core-fringe}
structure (following the terminology of our previous work~\cite{Benson-2018-found-graph-data}):
we collect data by measuring all of the interactions involving a
{\em core} set of nodes, and in the process we also learn about the
core's interaction with a typically larger set of additional nodes---the
{\em fringe}---that does not directly belong to the measured system.
\Cref{fig:cf} illustrates the basic structure schematically
and also in a typical real-life scenario: If we collect a
dataset by measuring all email communication to and from
the executives of a company (the core), then the data will also
include links to people outside this set with whom
members of the core exchanged email (the fringe).%
\footnote{This distinction between core and fringe is
fundamentally driven by measurement of the available data;
we have measured all interactions involving members of
the core, and this brings the fringe indirectly into the data.
As such, it is distinct from work on the {\em core-periphery structure}
of networks,
which typically refers to settings in which the core and periphery
both fully belong to the measured network, and the distinction is
in the level of centrality or status that the core has relative to
the periphery 
\cite{Borgatti-2000-CP,Holme-2005-CP,Rombach-2017-CP,Zhang-2015-SBM-CP}.}
We thus have two kinds of links:
links between two members of the core, and links between a member of the
core and a member of the fringe. Links between fringe members are 
not visible, even though we are aware of both fringe nodes through their interactions with the core.

Despite the fundamental role of core-fringe structure in network data
and a long history of awareness of this issue in the
social sciences \cite{Laumann-1989-boundary},
there has been little systematic attention paid to its 
implications in basic network inference tasks.
If we are trying to predict network structure on 
a measured set of core nodes,
what is the best way to make use of the fringe?
Is it even clear that incorporating the fringe nodes will help?
To study these questions, it is important to have
a concrete task on the network where notions of performance
as a function of available data are precise.

\xhdr{The present work: Core-fringe link prediction}
In this paper, we study the role of core-fringe structure 
through one of the standard network inference
problems: {\em link prediction}~\cite{LibenNowell-2007-link-pred,Lu-2011-link-pred}.
Link prediction is a problem in which the goal is
to predict the presence of {\em unseen} links in a network.
Links may be unseen for a variety of reasons, depending on the
application---we may have observed the network up to a certain
point in time and want to forecast new links, or we may have 
collected a subset of the links and want to know which additional ones are present.

Abstractly, we will think of the link prediction problem as 
operating on a graph $G = (V,E)$ whose edges are divided
into a set of observed edges and a set of unseen edges.
From the network structure on the observed edges, we would like
to predict the presence of the unseen edges as accurately as possible.
A large range of heuristics have been proposed for this problem,
many of them based on the empirical principle that nodes with
neighbors in common are generally more likely to be connected by a link
\cite{LibenNowell-2007-link-pred,Lu-2011-link-pred,Rapoport-1953-triadic}.

The issue of core-fringe structure shows up starkly in the
link prediction problem. Suppose the graph $G$ has nodes that are divided into
a core set $C$ and a fringe set $F$, and our goal is to predict unseen
links between pairs of nodes in the core.
One option would be to perform this task using only the portion of
$G$ induced on the core nodes.
But we could also perform the task using larger amounts of $G$ by
taking the union of the core nodes with any subset of the fringe,
or with all of the fringe.
The key question is how much of the fringe we should include 
if our goal is to maximize performance on the core;
existing work provides little guidance about this question.

\xhdrq{How much do fringe nodes help in link prediction}
We explore this question in a broad collection of network datasets derived
from email, telecommunication, and online social networks.
For concreteness, our most basic formulation 
draws on common-neighbor heuristics to answer the following version of the
link prediction question: given two pairs of nodes drawn from the core,
$\{u,v\}$ and $\{w,z\}$, which pair is more likely to be connected by a link?
(In our evaluation framework, we will focus on cases in which 
exactly one of these pairs is truly connected by a link,
thus yielding a setting with a clear correct answer.)
To answer this question, we could use information about the common
neighbors that $\{u,v\}$ and $\{w,z\}$ have only in the core, or 
also in any subset of the fringe.
How much fringe information should we use, if we want to maximize
our probability of getting the correct answer?

It would be natural to suspect
that using all available data, i.e., including all of the fringe nodes,
would maximize performance.
What we find, however, is a wide range of behaviors.
In some of our domains---particularly the social-networking data---link
prediction performance increases monotonically in the amount
of fringe data used, though with diminishing returns as 
we incorporate the entire fringe.
In the other domains, however, we find a number of instances where
using an intermediate level of fringe, i.e.,  a proper subset of
the fringe nodes, yields a performance that dominates
the option of including all of the fringe or none of it.
And there are also cases where prediction is best when 
we ignore the fringe entirely.
Given that proper subsets of the fringe can yield better performance
than either extreme, we also consider the process of selecting a subset
of the fringe; in particular, we study different natural {\em orderings}
of the fringe nodes and then select a subset by
searching over prefixes of these orderings.

To try understanding this diversity of results, we turn to
basic random graph models, adapting them to capture the problem of
link prediction in the presence of core-fringe structure.
We find that simple models are rich enough to display the same diversity of 
behaviors in performance, where the optimal amount of fringe might be 
all, some, enough, or none.
More specifically, we analyze the signal-to-noise ratio for our basic link prediction primitive
in two heavily-studied network models: stochastic block models (SBMs), in which
random edges are added with different probabilities between a set of
planted clusters~\cite{Abbe-2016-exact,Abbe-2018-community,Decelle-2011-SBM,Mossel-2014-belief};
and small-world lattice models, in which 
nodes are embedded in a lattice, and links between nodes
are added with probability decaying as a power of the distance~\cite{Watts-1998-small-world,Kleinberg-2006-ICM}.
We prove that there are instances of the SBM with certain linking probabilities
in which the signal-to-noise ratio is optimized by including all
the fringe, enough of the fringe, or none of the fringe.
For small-world lattice models, we find in the most basic formulation
that the signal-to-noise ratio is optimized by including
an intermediate amount of fringe: essentially, if the core is a bounded
geometric region in the lattice, then the optimal strategy for link 
prediction is to include the fringe in a larger region that extends
out from the core; but if we grow this region too far then
performance will decline.

The analysis of these models provides us with some qualitative
higher-level insight into the role of fringe nodes in link prediction.
In particular, the analysis can be roughly summarized as follows:
the fringe nodes that are most well-connected to the core
are providing valuable predictive signal without significantly 
increasing the noise; but as we continue including fringe nodes that
are less and less well-connected to the core, the signal decreases much faster
than the noise, and eventually the further fringe nodes are 
primarily adding noise in a way that hurts prediction performance.

More broadly, the results here indicate that the question of how
to handle core-fringe structure in network prediction problems
is a rich subject for investigation, and an important one given
how common this structure is in network data collection.
An implication of both our empirical and theoretical results is that
it can be important for problems such as link prediction to measure
performance with varying amounts of additional data, and to accurately
evaluate the extent to which this additional data is primarily
adding signal or noise to the underlying decision problem.

\section{Empirical network analysis}\label{sec:empirical}

We first empirically study how including fringe nodes can affect
link prediction on a number of datasets.
While we might guess that any additional data
we can gather would be useful for prediction, we see that
this is not the case.
In different datasets, incorporating all, none, some, or enough fringe
data leads to the best performance.
We then show in \cref{sec:theory} that this
variability is also exhibited in the behavior of simple random graph models.

\xhdr{Evaluating link prediction with core-fringe structure}
There are several ways to evaluate link prediction performance~\cite{LibenNowell-2007-link-pred,Lu-2011-link-pred}.
We set up the prediction task in a natural way that
is also amenable to theoretical analysis in \cref{sec:theory}.
We assume that we have a graph $G = (V, E)$ with a known set of core nodes $C \subseteq V$
and fringe nodes $F = V - C$.
The edge set is partitioned into $\edgetrain$ and $\edgetest$, where 
$\edgetest$ is a subset of the edges that connect two nodes in the core $C$.
The form of $\edgetest$ depends on the dataset,
which we describe in the following sections.
In general, our core-fringe link prediction evaluation is based on how well
we can predict elements of $\edgetest$ given
the graph $\graphtrain = (V, \edgetrain)$.

Our atomic prediction task 
considers two pairs of nodes $\{u, v\}$ and $\{w, z\}$ such that
(i) all four nodes are in the core (i.e., $u, v, w, z \in C$);
(ii) neither pair is an edge in $\edgetrain$;
(iii) the edge $(u, v)$ is a positive sample, meaning that $(u, v) \in \edgetest$; and
(iv) the edge $(w, z)$ is a negative sample, meaning that $(w, z) \notin \edgetest$.
We use an algorithm that takes as input $\graphtrain$ and outputs a score
function $s(x, y)$ for any pair of nodes $x, y \in C$; 
the algorithm then predicts that the
pair of nodes with the higher score is more likely to be in the test set.
Thus, the algorithm makes a correct prediction if $s(u, v) > s(w, z)$.
We sample many such 4-tuples of nodes uniformly at random and measure
the fraction of correct predictions.

We evaluate two score functions that are common heuristics for link
prediction~\cite{LibenNowell-2007-link-pred}. The first is the \emph{number of common neighbors}:
\begin{equation}
s(x, y) = \lvert N(x) \cap N(y) \rvert,
\end{equation}
where $N(z)$ is the set of neighbors of node $z$ in the graph.
The second is the \emph{Jaccard similarity} of the neighbor sets:
\begin{equation}
s(x, y) = \frac{\lvert N(x) \cap N(y) \rvert}{\lvert N(x) \cup N(y) \rvert}.
\end{equation}
We choose these score functions for a few reasons.
First, they are flexible enough to be feasibly deployed even if only
minimal information about the fringe is available;
more generally, their robustness 
has motivated their use as heuristics in practice~\cite{Goel-2013-Discovering,Gupta-2013-WTF} and throughout
the line of research on link prediction~\cite{LibenNowell-2007-link-pred}.
Second, 
they are amenable to analysis: we can explain some of our results by analyzing
the common neighbors heuristic on random graph models, and
they are rich enough to expose a complex landscape of
behavior. This is sufficient for the present study, but it would be
interesting to examine more sophisticated link prediction algorithms
in our core-fringe framework as well.

We parameterize the training data 
by how much fringe information is included. To do this, we construct
a nested sequence of sets of vertices, each of which induces a set of
training edges. Specifically, the initial set of vertices is the core, and
we continue to add fringe nodes to construct a nested sequence of vertices:
\begin{equation}\label{eq:nodeseq}
C = \nodeparam{0} \subseteq \nodeparam{1} \subseteq \cdots \subseteq \nodeparam{D} = V.
\end{equation}
The nested sequence of vertex sets then induces a nested sequence
of edges that are the training data for the link prediction algorithm;
for a value of $d$ between $0$ and $D$, we write
\begin{equation}\label{eq:edgetrainseq}
\edgetrainparam{d} = \{ (u, v) \in E \;\vert\; u, v \in \nodeparam{d} \} \cap \edgetrain.
\end{equation}
From \cref{eq:nodeseq,eq:edgetrainseq},
$\edgetrainparam{d} \subseteq \edgetrainparam{d+1}$, and
$\edgetrainparam{D} = \edgetrain$.
The parameterization of the vertex sets will depend on the dataset, and
we examine multiple sequences $\{\nodeparam{d}\}$ to study how
different interpretations of the fringe give varying outcomes.
Our main point of study is link prediction performance \emph{as a function of $d$}.

\subsection{Email networks}

\begin{table}[tb]
\setlength{\tabcolsep}{5pt}
\centering
\caption{Summary statistics of email datasets.}
\begin{tabular}{r c c c c}
\toprule
        & \# core & \# fringe & \# core-core & \# core-fringe \\
Dataset & nodes   & nodes     & edges & edges \\
\midrule
email-Enron      & 148 & 18,444 & 1,344 & 41,883 \\
email-Avocado & 256 & 27,988 & 7,416 & 50,048 \\
email-Eu            & 1,218 & 200,372 & 16,064 & 303,793 \\
email-W3C        & 1,995 & 18,086 & 1,777 & 30,097 \\
\midrule
email-Enron-1 & 37 & 7,511 & 86 & 11,862 \\
email-Enron-2 & 37 & 6,440 & 81 & 10,648 \\
email-Enron-3 & 37 & 6,379 & 80 & 10,390 \\
email-Enron-4 & 37 & 6,587 & 95 & 10,987 \\
\bottomrule
\end{tabular}
\label{tab:email_datasets}
\end{table}

\begin{figure*}
\phantomsubfigure{fig:perf_email_A}
\phantomsubfigure{fig:perf_email_B}
\phantomsubfigure{fig:perf_email_C}
\phantomsubfigure{fig:perf_email_D}
\phantomsubfigure{fig:perf_email_E}
\phantomsubfigure{fig:perf_email_F}
\phantomsubfigure{fig:perf_email_G}
\phantomsubfigure{fig:perf_email_H}
\includegraphics[width=2\columnwidth]{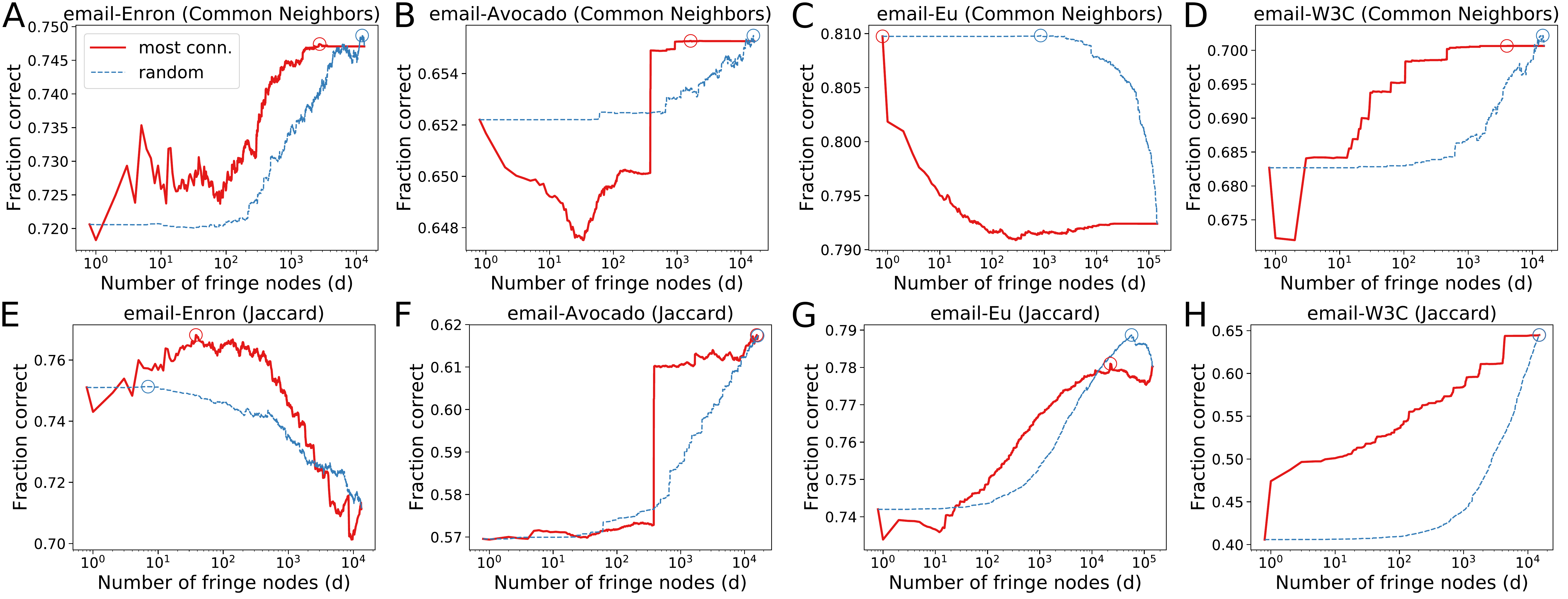}
\caption{Link prediction performance of the Common Neighbors (top) and Jaccard similarity (bottom) score functions
on four email networks as a function of $d$, the number of fringe nodes included. 
Two orderings of fringe nodes are considered: one by the most connections
to the core (red) and one random (blue). A circle marks the best performance.
There is a striking variety in how the fringe affects performance.
In some cases, we should ignore the fringe entirely (C);
in others, performance increases with the size of the fringe (F, H);
and in still others, some intermediate amount of fringe is optimal (E, G).}
\label{fig:perf_email}
\end{figure*}

\begin{figure*}
\phantomsubfigure{fig:perf_enron_A}
\phantomsubfigure{fig:perf_enron_B}
\phantomsubfigure{fig:perf_enron_C}
\phantomsubfigure{fig:perf_enron_D}
\phantomsubfigure{fig:perf_enron_E}
\phantomsubfigure{fig:perf_enron_F}
\phantomsubfigure{fig:perf_enron_G}
\phantomsubfigure{fig:perf_enron_H}
\includegraphics[width=2\columnwidth]{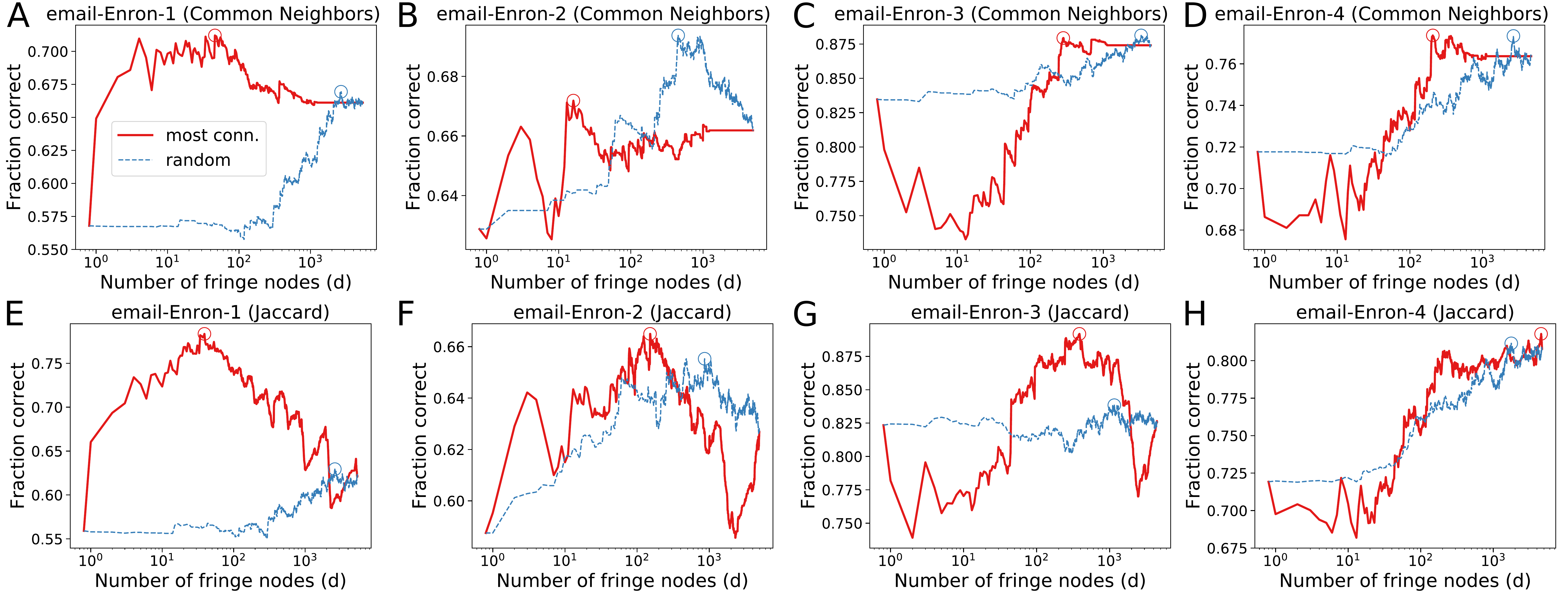}
\caption{Link prediction performance experiments analogous to those in \cref{fig:perf_email} but on four
subsets of email-Enron. In most cases, including some interior amount of fringe nodes---between 10 and a few hundred---yields the optimal performance.}
\label{fig:perf_enron}
\end{figure*}

Our first set of experiments analyzes email networks.
The core nodes in these datasets are members of some organization,
and the fringe nodes are those outside of the organization that communicate
with those in the core. We use four email networks; in each, the
nodes are email addresses, and the time that an edge formed
is given by the timestamp of the first email between two nodes.
For simplicity, we consider all graphs to be undirected, even though there
is natural directionality in the links.
Thus, each dataset is a simple, undirected graph, where each edge has a timestamp
and each node is labeled as core or fringe.
The four datasets are
(i) \emph{email-Enron}: the network in \cref{fig:cf},
  where the core nodes correspond to accounts whose emails
  were released as part of a federal investigation~\cite{Klimt-2004-Enron,Benson-2018-found-graph-data};
(ii) \emph{email-Avocado}: the email network of a now-defunct company, where the core
  nodes are employees (we removed accounts associated with non-people, such as conference rooms).\footnote{\url{https://catalog.ldc.upenn.edu/LDC2015T03}}  
(iii) \emph{email-Eu}: a network that consists of emails involving members of a European
  research institution, where the core nodes are the institution's members~\cite{Leskovec-2007-densification,Yin-2017-local}; and
(iv) \emph{email-W3C}: a network from W3C email threads, where
  core nodes are those addresses with a \texttt{w3.org} domain~\cite{Craswell-2005-TREC,Benson-2018-found-graph-data}.
\cref{tab:email_datasets} provides basic summary statistics.

An entire email network dataset is a graph $G = (V, E)$, where $C \subseteq V$
is a set of core nodes, and each edge $e \in E$
is associated with a timestamp $t_e$.
Here, our test set  is derived from the temporal information.
Let $t^*$ be the 80th percentile of timestamps on edges between core nodes.
Our test set of edges is the final 20\% of edges between core nodes that
appear in the dataset:
\begin{equation}
\edgetest = \{(u, v) \in E \;\vert\; u, v \in C \text{ and } t_{(u, v)} \ge t^*\}.
\end{equation}
The training set is then given by edges appearing before $t^*$,
i.e., the edges appearing in the first 80\% of time spanned by the data:
\begin{equation}\label{eq:email_train}
\edgetrain = \{(u, v) \in E \;\vert\; t_{(u, v)} < t^*\}.
\end{equation}

\xhdr{Fringe ordering}
Next, we form the nested sequence of training set edges (\cref{eq:edgetrainseq})
by sequentially increasing the amount of fringe information (\cref{eq:nodeseq}). 
Recall that $\edgetrainparam{d}$ is simply the set of edges
in $\edgetrain$ in which both end points are in the vertex set $\nodeparam{d}$.
To build $\{ \nodeparam{d} \}$, we start with $\nodeparam{0} = C$, the core set,
and then add the fringe nodes one by one in order of decreasing
degree in the graph $\graphtrain = (V, \edgetrain)$.
By definition, fringe nodes cannot link between themselves, so this ordering is equivalent
to adding fringe nodes in order of the number of core nodes to which they connect.
For purposes of comparison, we also evaluate
this ordering relative to a uniformly random ordering of the fringe nodes. 
To summarize, given an ordering $\sigma$ of the fringe nodes $F$,
we create a nested sequence of node sets $\nodeparam{d} = C \cup \{\sigma_1, \ldots, \sigma_d\}$
in two ways:
(i) \emph{Most connected}: $\sigma$ is the ordering of nodes in the fringe $F$ by decreasing degree in the graph induced by $\edgetrain$ (\cref{eq:email_train}); and
(ii) \emph{Random}: $\sigma$ is a random ordering of the nodes in $F$. 

\xhdr{Link prediction}
We use the \emph{most connected} and \emph{random} ordering to predict links in the test set of edges, 
as described at the beginning of \cref{sec:empirical}.
Recall that we needed a set of candidate comparisons between two potential edges (one of which
does appear in the test). We guess that the pair of nodes with the larger number of common neighbors
or larger Jaccard similarity score will be the set that appears in the test set.
We sample $10 \cdot \lvert \edgetest \rvert$ pairs from $\edgetest$ (allowing
for repeats) and combine each of them with two nodes selected uniformly at random that never form an edge.
Prediction performance is measured in terms of the fraction
of correct guesses as a function of $d$, the number of fringe nodes included. This
entire procedure is repeated 10 times (with 10 different sets of random samples) and
the mean accuracy is reported in \cref{fig:perf_email}.

The results exhibit a wide variety of behavior. 
In some cases, performance tends to increase monotonically with the
number of fringe nodes (\cref{fig:perf_email_F,fig:perf_email_H}).
In one case, we achieve optimal performance by ignoring the fringe entirely
(\cref{fig:perf_email_C}).
In yet another case, some interior amount of fringe is optimal before prediction
performance degrades (\cref{fig:perf_email_E}).
In several cases, we see a saturation effect, where performance flattens as we increase
more fringe nodes (e.g., \cref{fig:perf_email_A,fig:perf_email_D}). This is partly a consequence of how we ordered the fringe---nodes
included towards the end of the sequence are less connected and thus have relatively less
impact on the score functions. In these cases, one practical consequence is that we could ignore large
amounts of the data and get roughly the same performance, which would save computation time.
In another case, the first few hundred most connected fringe nodes leads to worse performance,
but eventually having enough fringe improves performance (\cref{fig:perf_email_B}).
Finally, there are also cases where the optimal performance over $d$ is better for a random
ordering of the fringe than for the most connected ordering 
(\cref{fig:perf_email_D,fig:perf_email_G}).

We repeated the same set of experiments on subgraphs of email-Enron
by partitioning the core set of nodes $C$ into four groups to induce four
different datasets. In each dataset, the other members of the core are removed
from the graph entirely (the bottom part of \cref{tab:email_datasets} lists basic summary statistics).
The results in \cref{fig:perf_enron} provide further evidence that it is a priori unclear how much fringe
information one should include to achieve the best performance.
In nearly all cases, the optimal performance when including fringe nodes in order
of connectedness is somewhere between 10 and a few hundred nodes, out of a total of 
several thousand. And we again see that including initial fringe information by 
connectedness has worse performance than ignoring the fringe entirely, until eventually incorporating
enough fringe information provides enough signal to improve prediction performance
(\cref{fig:perf_enron_C,fig:perf_enron_D}).

\subsection{Telecommunications networks}

\begin{table}[tb]
\setlength{\tabcolsep}{5pt}
\centering
\caption{Summary statistics of telecommunications datasets. 
Core nodes are participants in the Reality Mining study.}
\begin{tabular}{r c c c c}
\toprule
        & \# core & \# fringe & \# core-core & \# core-fringe \\
Dataset & nodes   & nodes     & edges & edges \\
\midrule
call-Reality      & 91 & 8,927 & 127 & 10,512 \\
text-Reality        & 91 & 1,087 & 32 & 1,920 \\
\bottomrule
\end{tabular}
\label{tab:telecomms_datasets}
\end{table}

Next, we study telecommunications datasets from cell phone usage
amongst individuals participating in the Reality Mining project~\cite{Eagle-2005-Reality}.
This project recorded cell phone activity of students and faculty in the MIT Media Laboratory,
including calls and SMS texts between phone numbers.
We consider the participants (more, specifically, their phone numbers) 
as the core nodes
in our network. Edges are phone calls or SMS texts between two people,
some of which are fringe nodes corresponding to people who were not recruited for the experiment.
We process the data in the same way as for email networks---directionality was removed
and the edges are accompanied by the earliest timestamp of communication between the two nodes.
We study two datasets:
(i) \emph{call-Reality}: the network of phone calls~\cite{Eagle-2005-Reality,Benson-2018-found-graph-data}; and
(ii) \emph{text-Reality}: the network of SMS texts~\cite{Eagle-2005-Reality,Benson-2018-found-graph-data}.
\Cref{tab:telecomms_datasets} provides some basic summary statistics.

\begin{figure}[tb]
\phantomsubfigure{fig:perf_telecomms_A}
\phantomsubfigure{fig:perf_telecomms_B}
\phantomsubfigure{fig:perf_telecomms_C}
\phantomsubfigure{fig:perf_telecomms_D}
\includegraphics[width=\columnwidth]{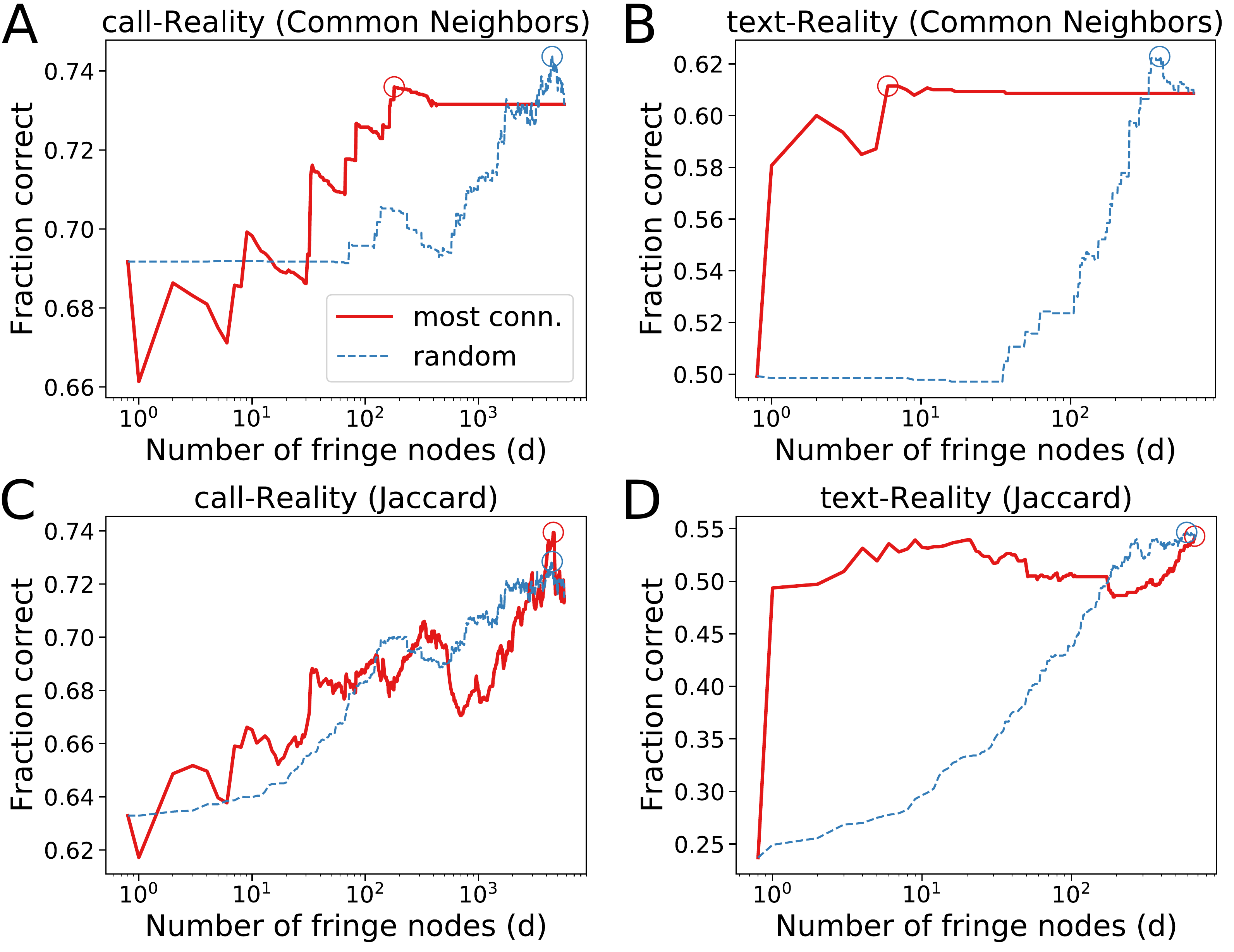}
\caption{Link prediction performance in the telecommunications datasets as a function of $d$,
 the number of fringe nodes used for prediction. Circles mark the largest value.
With the Common Neighbors score function, a small number of fringe nodes
is optimal for these datasets.}
\label{fig:perf_telecomms}
\end{figure}

\xhdr{Fringe ordering}
The structure of these networks is the same
as the email networks---the dataset is a recording of the interactions
of a small set of core nodes with a larger set of fringe nodes.
We use the same two orderings---most connected to core and random---as
we did for the email datasets.
Thus, the nested sequence of node sets $\{\nodeparam{d}\}$ is again constructed by 
adding one node at a time.

\xhdr{Link prediction}
\Cref{fig:perf_telecomms} shows the link prediction performance
on the telecommunications datasets.
With the Common Neighbors score function, we again find that the optimal
amount of fringe is a small fraction of the entire dataset---around 100 of nearly
9,000 nodes  in call-Reality (\cref{fig:perf_telecomms_A})
and around 10 of over 1,000 nodes in text-Reality (\cref{fig:perf_telecomms_B}).
The performance of the Jaccard similarity also has an interior optimum for the
call-Reality dataset, although the optimum size here is larger---around half of the nodes.

Prediction performance with the fringe nodes ordered by connectedness to
the core is in general quite erratic for the call-Reality dataset.
This is additional evidence that
the fringe nodes can be a noisy source of information.
For instance, just including the first fringe node results in a noticeable drop
in prediction performance for both the Common Neighbors and Jaccard score functions.

\subsection{Online social networks}

\begin{table}[b]
\setlength{\tabcolsep}{5pt}
\centering
\caption{Summary statistics of LiveJournal networks. The sets of core nodes
are users in particular states or counties.}
\begin{tabular}{r c c c c}
\toprule
        & \# core & \# fringe & \# core-core & \# core-fringe \\
Dataset & nodes   & nodes     & edges & edges \\
\midrule
Wisconsin      & 16,842 & 58,965 & 48,078 & 87,723 \\
Texas      & 65,617 & 155,357 & 256,174 & 312,746 \\
New York      & 82,275 & 208,516 & 281,981 & 477,845 \\
California      & 152,171 & 244,605 & 712,803 & 722,835 \\
\midrule
Marathon      & 223 & 1,032 & 390 & 1,363 \\
Eau Claire    & 295 & 1,392 & 252 & 1,635 \\
Dane      & 2,281 & 15,580 & 5,192 & 21,156 \\
Milwaukee    & 3,743 & 19,934 & 10,750 & 30,955 \\
\bottomrule
\end{tabular}
\label{tab:lj_datasets}
\end{table}

\begin{figure*}
\phantomsubfigure{fig:perf_lj_states_A}
\phantomsubfigure{fig:perf_lj_states_B}
\phantomsubfigure{fig:perf_lj_states_C}
\phantomsubfigure{fig:perf_lj_states_D}
\phantomsubfigure{fig:perf_lj_states_E}
\phantomsubfigure{fig:perf_lj_states_F}
\phantomsubfigure{fig:perf_lj_states_G}
\phantomsubfigure{fig:perf_lj_states_H}
\includegraphics[width=2\columnwidth]{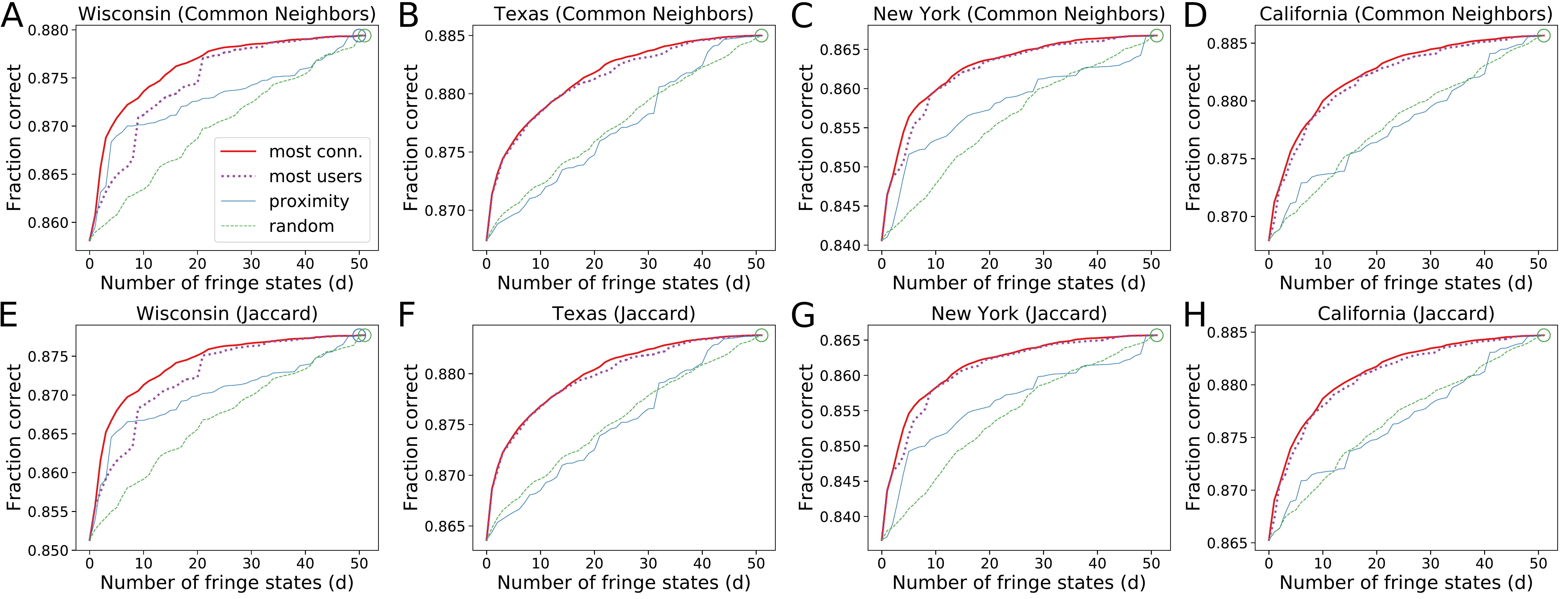}
\caption{Link prediction performance of the Common Neighbors (top) and Jaccard similarity (bottom) score functions
on four LiveJournal datasets where the core consists of users in a state.
Performance is measured as a function of $d$, the number of fringe states included for prediction (for four different orderings of the fringe).
The most connected ordering performs the best, monotonically increases with $d$,
and saturates when $d \ge 20$.}
\label{fig:perf_lj_states}
\end{figure*}

\begin{figure*}
\phantomsubfigure{fig:perf_lj_counties_A}
\phantomsubfigure{fig:perf_lj_counties_B}
\phantomsubfigure{fig:perf_lj_counties_C}
\phantomsubfigure{fig:perf_lj_counties_D}
\phantomsubfigure{fig:perf_lj_counties_E}
\phantomsubfigure{fig:perf_lj_counties_F}
\phantomsubfigure{fig:perf_lj_counties_G}
\phantomsubfigure{fig:perf_lj_counties_H}
\includegraphics[width=2\columnwidth]{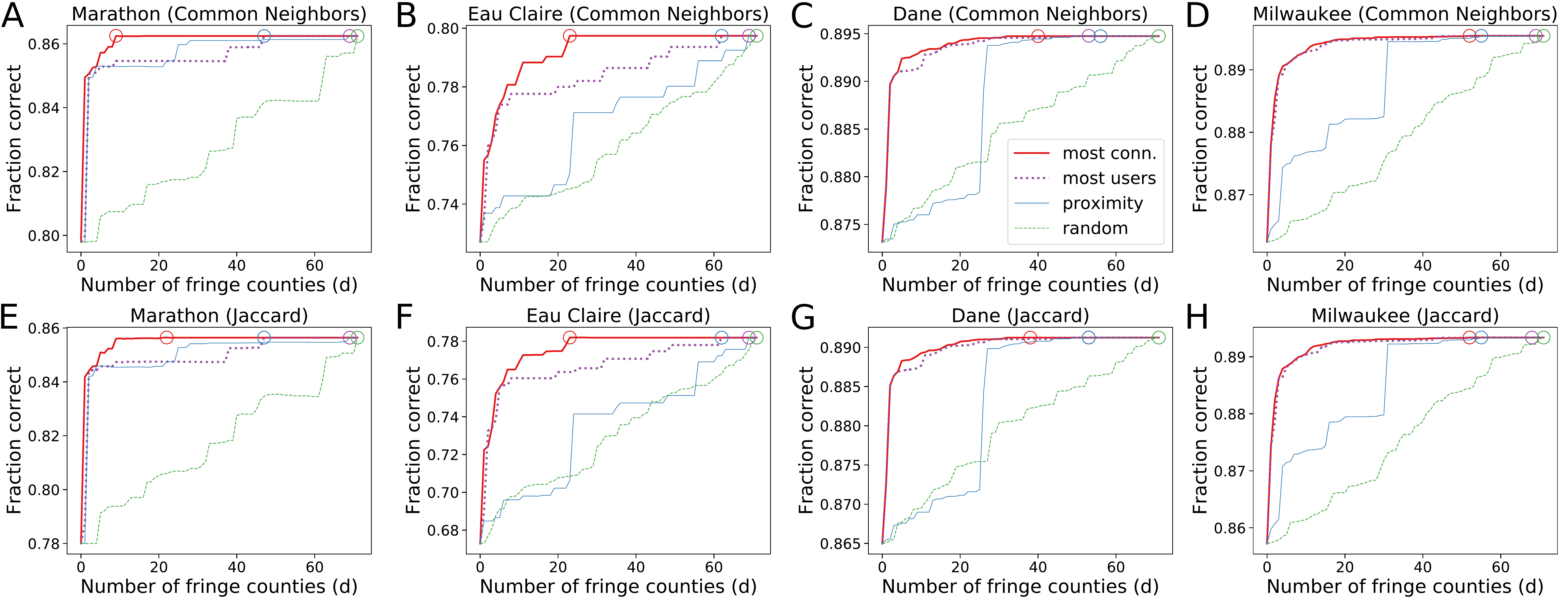}
\caption{Link prediction analogous to \cref{fig:perf_lj_states} but where the core is a Wisconsin county and the fringe is the rest of the state.
Performance is measured as a function of $d$, the number of fringe counties included for prediction.
The most connected ordering performs the best, and performance with this ordering quickly saturates.}
\label{fig:perf_lj_counties}
\end{figure*}

We now turn to links in an online social network of
bloggers from the LiveJournal community~\cite{LibenNowell-2005-geographic}.
Edges are derived from users listing friends in their profile 
(here, we consider all links as undirected). 
Users also list their geographic location, and for the purposes of this study, we have
restricted the dataset to users reporting locations in the United States and Puerto Rico.
For each user, we have both their territory of residence (one of the 50 U.S.\ states, 
Washington D.C., or Puerto Rico; henceforth simply referred to as ``state'')
as well as their county of residence, when applicable.

We construct core-fringe networks in two ways.
First, we form a core from all user residing in a given state $S$.
The core-fringe network then consists of all friendships
where at least one node is in state $S$.
We construct four such networks, using
the states Wisconsin, Texas, California, and New York.
Second, we form a core from users residing in a county and construct a core-fringe
network in the same way, but we only consider friendships amongst users in the state
containing the county. We construct four such networks,
using Marathon, Eau Claire, Dane, and Milwaukee counties (all in Wisconsin).
\Cref{tab:lj_datasets} lists summary statistics of the datasets.

Unlike the email and telecommunications networks, we do not have timestamps
on the edges. We instead construct $\edgetest$ from a random sample
of 20\% of the edges between core nodes, i.e., from $\{(u, v) \in E \;\vert\; u, v \in C\}$,
and set $\edgetrain = E - \edgetest$.
Predicting on such held out test sets is used for predicting missing links~\cite{Clauset-2008-hierarchical,Ghasemian-2018-evaluating};
here, we use it for link prediction, as is standard practice~\cite{Lu-2011-link-pred}.

\xhdr{Fringe ordering}
We again incorporate fringe nodes from a nested sequence of node sets 
$\{ \nodeparam{d} \}$, where $\nodeparam{d} \subseteq \nodeparam{d + 1}$ and $\nodeparam{0} = C$, the set of core nodes.
The nested sequence of training sets is then
$\edgetrainparam{d} = \{(u, v) \in \edgetrain \;\vert\; u, v \in \nodeparam{d} \}$.
With email and telecommunications, we considered fringe nodes
one by one to form the sequence $\{ \nodeparam{d} \}$. 
For LiveJournal,
each successive node set instead corresponds to adding all nodes in a state or a county.
For the cores constructed from users in a state (Wisconsin, Texas, New York, or California),
we form orderings $\sigma$ of the remaining states in four ways:
(i) \emph{Most connected}: $\sigma$ is the ordering of states by decreasing number of links to the core state;
(ii) \emph{Most users}: $\sigma$ is the ordering by decreasing number of users;
(iii) \emph{Proximity}: $\sigma$ is the ordering of states by closest proximity to the core state 
(measured by great-circle distance between geographic centers); and
(iv) \emph{Random}: $\sigma$ is a random ordering of the states.

Let $U_S$ be the users in state $S$.
The sequence of vertex sets is all users in the core and first $d$
states in the ordering $\sigma$. Formally, $\nodeparam{d} = C \phantom{!} \cup \phantom{!} (\cup_{t=1}^{d} U_{\sigma_t})$.
For networks whose core are users in a Wisconsin county,
we use the same orderings, except we order counties instead of states
and the fringe is only counties in Wisconsin.

\xhdr{Link prediction}
We measure the mean prediction performance over 10 random trials
as a function of the number $d$ of fringe states 
or counties included in the training data.
When states form the core, prediction performance is largely consistent (\cref{fig:perf_lj_states}).
For both score functions,
ordering by number of connections tends to perform the best,
with a rapid performance increase from approximately the first 10 states and
then saturation with a slow monotonic increase.
The prediction by states with the most
users performs nearly the same for the three largest states (Texas, New York, and California; \cref{fig:perf_lj_states_B,fig:perf_lj_states_C,fig:perf_lj_states_D}).
In Wisconsin and New York, ordering by proximity shows a steep
rise in performance for the first few states but then levels off (\cref{fig:perf_lj_states_A,fig:perf_lj_states_C,fig:perf_lj_states_E,fig:perf_lj_states_G}).
On the other hand, in California and Texas, ordering by proximity performs
roughly as well as a random ordering.

The networks where the cores are users from counties in Wisconsin have similar
characteristics to the networks where the cores are users from particular states (\cref{fig:perf_lj_counties}).
The ordering by county with the most connections performs the best.
Prediction performance quickly saturates in the two larger counties (\cref{fig:perf_lj_counties_C,fig:perf_lj_counties_D}),
and the proximity ordering can be good in some cases (\cref{fig:perf_lj_counties_A}).

\xhdr{Summary}
Usually, collecting additional data is thought to improve performance in machine learning.
Here we have seen that this is not the case in some networks with core-fringe structure.
In fact, including additional fringe information can affect link prediction performance in a number of ways.
In some cases, it is indeed true that additional fringe always helps, which was largely the case with LiveJournal (\cref{fig:perf_lj_states}).
In one email network, including any fringe data hurt performance (\cref{fig:perf_email_C}).
And yet in other cases, some intermediate amount of fringe data gave the best performance
(\cref{fig:perf_telecomms_A,fig:perf_telecomms_B}; \cref{fig:perf_enron}).
We also observed saturation in link prediction performance as we increased
the fringe size (\cref{fig:perf_lj_counties}) and that sometimes we need enough fringe before prediction
becomes better than incorporating no fringe at all (\cref{fig:perf_email_B,fig:perf_enron_C}).
While this landscape is complex, we show in the next section
how these behaviors can emerge in simple random graph models.

\section{Random graph model analysis}\label{sec:theory}
We now turn to the question of \emph{why} 
link prediction in core-fringe networks
exhibits such a wide variation in performance.
To gain insight into this question, we analyze the link prediction
problem on basic random graph models that have been
adapted to contain core-fringe structure.

Recall how our link prediction problem is evaluated: 
we are given two pairs $\{u, v\}$
and $\{w, z\}$; our algorithm predicts which of the two candidate edges is
the one that appears in the data through a score function; 
and the values of the score function (and hence the predictions) 
can change based on the inclusion of
fringe nodes. In a random graph model, we can think about using the same
score functions for the candidate edges $(u, v)$ and $(w, z)$, but now the
score functions and the existence of edges are random variables. 
As we will show,
the signal-to-noise ratio of the difference in score functions
is a key statistic to optimize in order to make the most accurate predictions,
and this can vary in different ways when including fringe nodes.

\xhdr{The signal-to-noise ratio (SNR)}
Suppose our data is generated by a random graph model and
that the indicator random variables $X, Y \in \{0,1\}$
correspond to the existence of two candidate edges $\{u,v\}$ and $\{w,z\}$, respectively,
where nodes $u$, $v$, $w$, and $z$ are distinct and chosen uniformly at random
amongst a set of core nodes.
Without loss of generality, we assume that $\prob{X} > \prob{Y}$ so that $(u, v)$
is more likely to appear.

We would like our algorithm to predict that the edge $\{u,v\}$ is the one 
that exists, since this is the more likely edge (by assumption).
However, our algorithm does not observe $X$ and $Y$ directly; instead, it sees
proxy measurements $(\proxyX, \proxyY)$, which are themselves random variables.
These proxy measurements correspond to the score function used by the algorithm;
in this section, we focus on the number of common neighbors score.
Our algorithm will (correctly) predict that edge $\{u, v\}$ is more likely if and only if $\proxyX > \proxyY$.

Furthermore, the proxy measurements are parameterized by the amount of fringe
information we have. Following our previous notation, we call
these random variables $\proxyXparam{d}$ and $\proxyYparam{d}$. These
variables represent the same measurements as $\proxyX$ and $\proxyY$ (such
as the number of common neighbors), just on a set of graphs 
parameterized by the amount of fringe information $d$.

Our goal is to optimize the amount of fringe to get the most accurate predictions.
Formally, if we let $\proxyZparam{d} \triangleq \proxyXparam{d} - \proxyYparam{d}$, this means:
\begin{align*}
\textstyle \underset{d}{\text{maximize}} \quad 
\prob{\proxyXparam{d} > \proxyYparam{d}} \iff 
\underset{d}{\text{maximize}} \quad \prob{\proxyZparam{d} > 0}.
\end{align*}
We assume that we have access to $\expec{\proxyZparam{d}}$ and $\var{\proxyZparam{d}}$
for all values of $d$. 
Our approach will be to maximize
the signal-to-noise ratio (SNR) statistic of $\proxyZparam{d}$, i.e.,
\begin{align*}
\textstyle \underset{d}{\text{maximize}} \quad \frac{\expec{\proxyZparam{d}}}{\sqrt{\var{\proxyZparam{d}}}} \triangleq \snr{\proxyZparam{d}}.
\end{align*}
We motivate this approach as follows.
Absent any additional information beyond the expectation and variance,
we must use some concentration inequality. Under the reasonable assumption
that $\expec{\proxyZparam{d}} > 0$ (which we later show holds in our models),
the proper inequality is Cantelli's:
$\prob{\proxyZparam{d} \ge 0} \ge 1 - \frac{\var{\proxyZparam{d}}}{\var{\proxyZparam{d}} + \expec{\proxyZparam{d}}^2} = \frac{\snr{\proxyZparam{d}}^2}{1 + \snr{\proxyZparam{d}}^2}$.
Thus, the lower bound on correctly choosing $X$ is monotonically increasing
in the SNR of our proxy measurement.
Using this probabilistic framework,
we can now see how some of the empirical behaviors
in \cref{sec:empirical} might arise.

\subsection{Stochastic block models}
In the stochastic block model, the nodes are partitioned into 
$K$ {\em blocks}, and for parameters $P_{i,j}$ (with $1 \leq i, j \leq K$), 
each node in block $i$
is connected to each node in block $j$ independently with 
probability $P_{i,j}$.
(Since our graphs are undirected, $P_{i, j} = P_{j, i}$.)
In our core-fringe model here, $K=4$, the core corresponds to
blocks $1$ and $2$, and the fringe corresponds to blocks $3$ and $4$.
This model turns out to be flexible enough to demonstrate a wide range of
behaviors observed in \cref{sec:empirical}.
We use the following notation for block probabilities:
\begin{equation}\label{eq:cf_sbm}
P = \begin{bmatrix}
p & q & r & s \\
q & p & s & r \\
r & s & 0 & 0\\
s & r & 0 & 0 \\
\end{bmatrix}.
\end{equation}
Our assumptions on the probabilities are that $q < p$ and $s \le r$.
We also assume that the first two blocks each contain $n_c$ nodes
and that the last two blocks each contain $n_f$ nodes.

We further assume we are given samples of 
four distinct nodes $u$, $v$, $w$, and $z$ chosen
uniformly at random from the core blocks such that $u$, $v$, and $w$ are in block 1
and $z$ is in block 2. Following our notation above, let
$X$ be the random variable that $(u, v)$ is an edge and $Y$ be the random variable
that $(w, z)$ is an edge ($\prob{X} > \prob{Y}$ since $p > q$). 
Our proxy measurements $\proxyX$ and $\proxyY$ are
the number of common neighbors of candidate edges $\{u, v\}$ and $\{w, z\}$.
Our algorithm will correctly predict that $(u, v)$ is more likely if $\proxyX > \proxyY$.

Our proxy measurements are parameterized by the amount of fringe
information that they incorporate. Here, we arbitrarily order the nodes
in the two equi-sized fringe blocks and say that the random variable $\proxyXparam{d}$
is the number of common neighbors between nodes $u$ and $v$ when including
the first $d$ nodes in both fringe blocks.
Similarly, the random variable $\proxyYparam{d}$ is 
the number of common neighbors of nodes $w$ and $z$.
By independence of the edge probabilities, some straightforward calculations show that
\begin{align*}
\expec{\proxyXparam{d}} &= 2(n_c - 1)p^2 + dr^2 + ds^2,\; \expec{\proxyYparam{d}} = 2(n_c - 1)pq + 2drs \\
\var{\proxyXparam{d}}  &= 2(n_c - 1)p^2(1 - p^2) + dr^2(1 - r^2) + ds^2(1 - s^2) \\
\var{\proxyYparam{d}} &= 2(n_c - 1)pq(1 - pq) + 2drs(1 - rs).
\end{align*}
With no fringe information, it is immediate that
the SNR is positive, i.e., $\snr{\proxyZparam{0}} > 0$:
$\expec{\proxyXparam{0} - \proxyYparam{0}} = 2(n_c - 1)p[p - q] > 0$ as $p > q$.
If the two fringe blocks connect to the two core blocks with equal probability,
then the SNR with no fringe is optimal.
\begin{lemma}[No-fringe optimality]\label{lem:SBM_no_fringe}
If $r = s$ in the core-fringe SBM, then $\snr{\proxyZparam{d}}$ decreases monotonically  in $d$.
\end{lemma}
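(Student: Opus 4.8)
The plan is to use the defining identity $\snr{\proxyZparam{d}} = \expec{\proxyZparam{d}}/\sqrt{\var{\proxyZparam{d}}}$ and show separately that, when $r = s$, the numerator $\expec{\proxyZparam{d}}$ is a positive constant that does not depend on $d$, while the denominator $\sqrt{\var{\proxyZparam{d}}}$ is strictly increasing in $d$. A positive constant divided by a strictly increasing quantity is strictly decreasing, which is exactly the claim.

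First I would compute the numerator directly from the mean formulas stated above. Subtracting, $\expec{\proxyZparam{d}} = \expec{\proxyXparam{d}} - \expec{\proxyYparam{d}} = 2(n_c-1)p(p-q) + d\,(r-s)^2$, since the fringe contributions combine as $r^2 + s^2 - 2rs = (r-s)^2$. Setting $r = s$ annihilates the $d$-dependent term, leaving $\expec{\proxyZparam{d}} = 2(n_c-1)p(p-q)$, which is constant in $d$ and strictly positive because $p > q$. So the numerator of the SNR is frozen at a positive value for every $d$.

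The substantive step is the denominator. Writing $\var{\proxyZparam{d}} = \var{\proxyXparam{d}} + \var{\proxyYparam{d}} - 2\,\mathrm{Cov}(\proxyXparam{d}, \proxyYparam{d})$, I would first argue that the covariance term does not depend on $d$. Each of $\proxyXparam{d}$ and $\proxyYparam{d}$ is a sum of indicators, one per candidate common neighbor: the term for candidate $a$ in $\proxyXparam{d}$ is the product of the edge indicators for $(u,a)$ and $(v,a)$, and the term for candidate $b$ in $\proxyYparam{d}$ is the product for $(w,b)$ and $(z,b)$. Two such terms are correlated only when they share an underlying edge, and because $u,v,w,z$ are distinct this can occur only when the candidate nodes themselves lie in $\{u,v,w,z\}$; every term whose candidate is a fringe node in block 3 or 4 is independent of all terms in the other count. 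Hence $\mathrm{Cov}(\proxyXparam{d}, \proxyYparam{d})$ is a sum of finitely many core-only cross terms and is constant in $d$. Consequently the whole $d$-dependence of $\var{\proxyZparam{d}}$ comes from $\var{\proxyXparam{d}} + \var{\proxyYparam{d}}$. From the two stated variance formulas, the coefficient of $d$ is $r^2(1-r^2) + s^2(1-s^2) + 2rs(1-rs)$, which collapses to $4r^2(1-r^2)$ when $r = s$, and this is strictly positive for $0 < r < 1$. Thus $\var{\proxyZparam{d}}$ is a strictly increasing affine function of $d$.

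Combining the two parts, $\snr{\proxyZparam{d}}$ equals a fixed positive constant over the square root of a strictly increasing quantity, so it decreases monotonically in $d$, as claimed. I expect the main obstacle to be the covariance bookkeeping: one must verify that the correlation between the two common-neighbor counts is confined to the four special core nodes and therefore contributes nothing that scales with $d$. If the model instead ranges each count over candidates disjoint from $\{u,v,w,z\}$, the two counts are exactly independent, the covariance vanishes identically, and this step is immediate.
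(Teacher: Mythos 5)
Your proof is correct and follows essentially the same route as the paper's: with $r = s$ the mean of $\proxyZparam{d}$ is a positive constant while its variance grows linearly in $d$, so the SNR is a fixed positive numerator over an increasing denominator and hence decreases. The only difference is that the paper simply asserts independence of $\proxyXparam{d}$ and $\proxyYparam{d}$ to add the variances, whereas you more carefully observe that any covariance is confined to the finitely many core-candidate cross terms and is therefore constant in $d$ --- a refinement of the same step, not a different argument.
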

\begin{proof}
When $r = s$, by independence of $\proxyXparam{d}$ and $\proxyYparam{d}$,
\begin{align*}
\expec{\proxyZparam{0}} = \expec{\proxyXparam{d} - \proxyYparam{d}} 
&= 2(n_c - 1)p^2 + dr^2 + ds^2 - 2(n_c - 1)pq - 2drs \\
&= 2(n_c - 1)p^2 - 2(n_c - 1)pq = \expec{\proxyXparam{0} - \proxyYparam{0}},
\end{align*}
and
$\var{\proxyZparam{d}} 
= \var{\proxyXparam{d}} + \var{\proxyYparam{d}} 
> \var{\proxyXparam{0}} + \var{\proxyYparam{0}} = \var{\proxyZparam{0}}$.
\end{proof}
This result is intuitive. If the two fringe blocks connect to the core nodes
with equal probability, then the node pairs $(u, v)$ and $(w, z)$ receive
additional extra common neighbors according to exactly
the same distributions. Thus, these
fringe nodes provide noise but no signal.
We confirm this result numerically with parameter
settings $p = 0.5$, $q = 0.3$, $s = r = 0.2$, and $n_c = 10$ (\cref{fig:SBM_A}).
Indeed, the SNR monotonically decreases as we include more fringe.

\begin{figure}[t]
\phantomsubfigure{fig:SBM_A}
\phantomsubfigure{fig:SBM_B}
\includegraphics[width=\columnwidth]{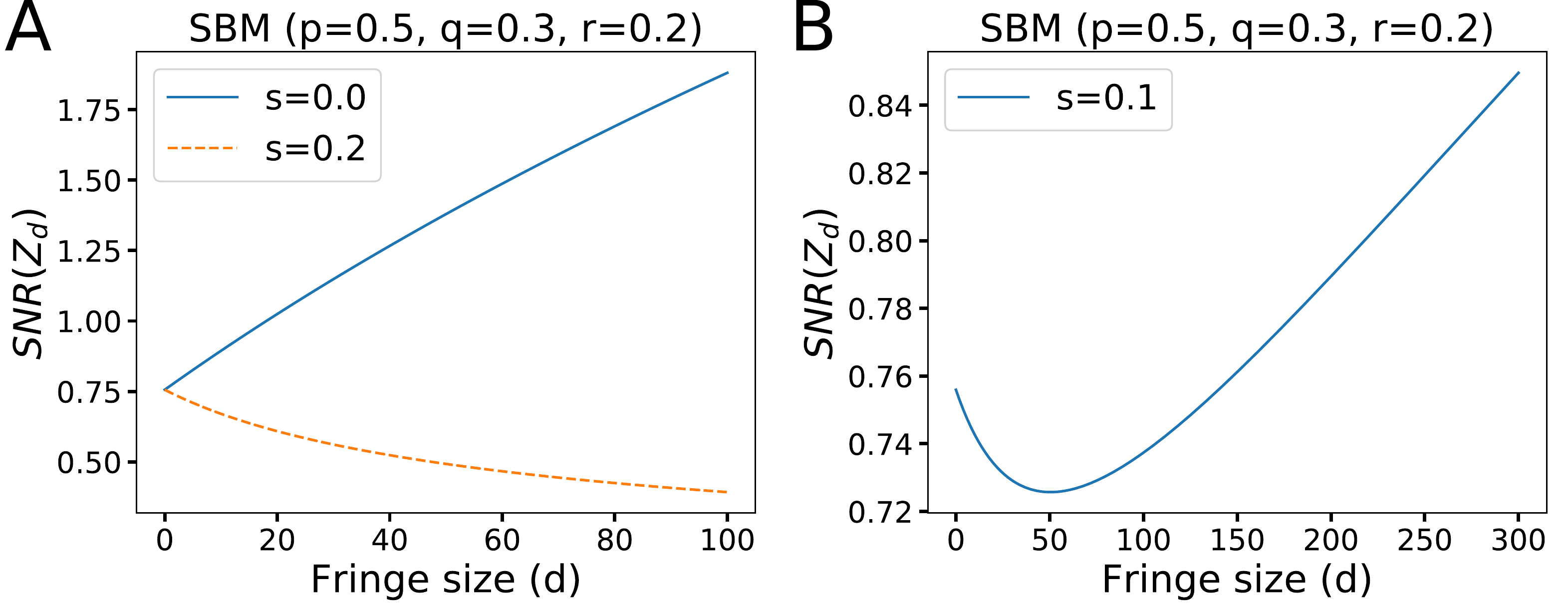}
\caption{SNR for the difference in 
the common neighbors in our stochastic block model
of core-fringe structure with $n_c = 10$. (A) When the fringe blocks have
equal probability of connecting to the two core blocks ($r = s$),
the SNR decreases monotonically with fringe size by \cref{lem:SBM_no_fringe}.
When fringe blocks only connect to one of the core blocks ($s = 0.0$),
the SNR increases monotonically with fringe size \cref{lem:SBM_all_fringe}.
(B) For an intermediate parameter setting, including the fringe hurts the SNR 
until enough fringe is included, at which point the SNR increases monotonically
(\cref{lem:SBM_enough_fringe}).
}
\label{fig:SBM}
\end{figure}

In \cref{sec:empirical}, we saw cases where including any additional
fringe information always helped.
The following lemma shows that we can set parameters in our 
core-fringe SBM such that including any additional fringe
information always increases the SNR.
\begin{lemma}[All-fringe optimality]\label{lem:SBM_all_fringe}
Let $r > 0$, $s = 0$ and $4(n_c - 1)(p^2 - p^4) > 1$ in the core-fringe
SBM. Then $\snr{\proxyZparam{d}}$ monotonically increases in $d$
and $\lim_{d \to \infty}\snr{\proxyZparam{d}} = \infty$.
\end{lemma}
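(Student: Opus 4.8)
The plan is to reduce the entire statement to the analysis of a single scalar function of $d$. First I would specialize the moment formulas stated just before the lemma to $s=0$. This annihilates every term carrying an $s$ factor: the fringe stops contributing to $\proxyYparam{d}$ altogether, and both moments of $\proxyZparam{d}$ become affine in $d$,
\begin{align*}
\expec{\proxyZparam{d}} &= 2(n_c-1)p(p-q) + dr^2,\\
\var{\proxyZparam{d}} &= 2(n_c-1)[p^2(1-p^2) + pq(1-pq)] \\
&\qquad + dr^2(1-r^2).
\end{align*}
Write $\mu_d \triangleq \expec{\proxyZparam{d}}$ and $\nu_d \triangleq \var{\proxyZparam{d}}$. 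Both have a strictly positive constant term and a nonnegative slope, and $\mu_d>0$ for every $d$ (since $p>q$ and $r>0$), so the Cantelli framework from the start of the section applies throughout and it suffices to study $\snr{\proxyZparam{d}} = \mu_d/\sqrt{\nu_d}$.

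Since $\mu_d>0$, monotonicity of $\snr{\proxyZparam{d}}$ is equivalent to monotonicity of its square $g(d)\triangleq \mu_d^2/\nu_d$. I would prove $g(d+1)\ge g(d)$ by clearing the positive denominators and reducing to $\mu_{d+1}^2\nu_d \ge \mu_d^2\nu_{d+1}$. Substituting $\mu_{d+1}=\mu_d+r^2$ and $\nu_{d+1}=\nu_d+r^2(1-r^2)$, expanding, and factoring out $r^2>0$ turns this into $Q(d)\ge 0$, where $Q(d)\triangleq 2\mu_d\nu_d + r^2\nu_d - (1-r^2)\mu_d^2$. The useful structural observation, and the step that makes the argument clean, is that $Q$ (viewed over the real variable $d$) is a convex parabola whose leading coefficient works out to $r^4(1-r^2)\ge 0$, and whose derivative at the left endpoint simplifies, after the $\mu_0$ terms cancel, to $Q'(0)=2r^2\nu_0 + r^4(1-r^2)>0$. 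Hence $Q$ is nondecreasing on $[0,\infty)$, so $Q(d)\ge Q(0)$ for all $d\ge 0$, and the whole monotonicity claim collapses to the single base inequality $Q(0)\ge 0$, i.e.\ $\nu_0(2\mu_0+r^2)\ge (1-r^2)\mu_0^2$.

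The base case is the main obstacle, and it is exactly where the hypothesis enters. Rewriting $p^2-p^4=p^2(1-p^2)$, the assumption $4(n_c-1)(p^2-p^4)>1$ is a lower bound on the constant term $2(n_c-1)p^2(1-p^2)$ sitting inside $\nu_0$, and the plan is to use this bound to show that $\nu_0(2\mu_0+r^2)$ dominates $(1-r^2)\mu_0^2$. The delicate part is the bookkeeping: one must track the constant terms of $\mu_0=2(n_c-1)p(p-q)$ and $\nu_0$ and verify that the slack guaranteed by the hypothesis is enough to close $Q(0)\ge 0$ without discarding too much along the way. Everything past this base case is automatic from the convexity argument above.

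Finally, for $\lim_{d\to\infty}\snr{\proxyZparam{d}}=\infty$ I would read off the leading-order growth. When $r<1$ the slope $r^2(1-r^2)$ of $\nu_d$ is positive, so $\mu_d\sim dr^2$ while $\sqrt{\nu_d}\sim r\sqrt{(1-r^2)d}$, giving $\snr{\proxyZparam{d}}\sim r\sqrt{d/(1-r^2)}\to\infty$; the boundary case $r=1$ is immediate, since then $\nu_d$ is constant while $\mu_d$ grows linearly. In either case the signal outgrows the square-root growth of the noise, so the SNR diverges.
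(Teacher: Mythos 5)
Your reduction is correct, and it is in fact more careful than the paper's own argument: with $s=0$ both moments are affine in $d$, squaring the SNR is legitimate because $\mu_d>0$, and your discrete step $\mu_{d+1}^2\nu_d \ge \mu_d^2\nu_{d+1}$ is exactly equivalent to $Q(d)\ge 0$ with $Q(d)=2\mu_d\nu_d+r^2\nu_d-(1-r^2)\mu_d^2$; moreover $Q'(d)=2r^2\nu_d+r^4(1-r^2)>0$, so everything does collapse to the base inequality $Q(0)\ge 0$. The genuine gap is that you never prove this base case---you only announce ``the plan'' and defer the ``bookkeeping''---and that step cannot be carried out: under the lemma's stated hypotheses the inequality is false. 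The hypothesis $4(n_c-1)(p^2-p^4)>1$ only yields $\nu_0>\tfrac12$, a bound that does not involve $\mu_0$ at all, whereas $Q(0)\ge 0$ demands $(2\mu_0+r^2)\nu_0\ge(1-r^2)\mu_0^2$, which for large $\mu_0$ forces $\nu_0\gtrsim\tfrac{1-r^2}{2}\mu_0$; since $\mu_0=2(n_c-1)p(p-q)$ grows linearly in $n_c$, a constant lower bound on $\nu_0$ cannot suffice. Concretely, take $n_c=100$, $p=0.99$, $q=0.01$, $r=0.1$, $s=0$: then $4(n_c-1)(p^2-p^4)\approx 7.7>1$, but $\mu_0\approx 192.1$ and $\nu_0\approx 5.80$, so $(2\mu_0+r^2)\nu_0\approx 2.2\times 10^3$ while $(1-r^2)\mu_0^2\approx 3.7\times 10^4$, i.e.\ $Q(0)<0$ and $\snr{\proxyZparam{1}}<\snr{\proxyZparam{0}}$. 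For these parameters the SNR actually decreases until $d\approx 1.8\times 10^4$ and only then increases without bound---``enough-fringe'' behavior, not monotone increase.

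This also explains the discrepancy with the paper's proof. The paper differentiates the continuous SNR and asserts positivity whenever $r^2\var{\proxyZparam{d}}>\tfrac12 r^2(1-r^2)$, a condition that the hypothesis does supply; but that expression drops the factor $\expec{\proxyZparam{d}}$ that the quotient rule places on the $\tfrac12 r^2(1-r^2)$ term. Restoring it gives the condition $\var{\proxyZparam{d}}>\tfrac12(1-r^2)\expec{\proxyZparam{d}}$, whose binding case ($d=0$) is precisely your base inequality. So your algebra is the corrected version of the paper's, and it exposes that the lemma needs a stronger hypothesis: for instance $2\left[p^2(1-p^2)+pq(1-pq)\right]\ge(1-r^2)p(p-q)$, which implies $Q(0)\ge 0$ for every $n_c$ and holds for the paper's numerical example ($p=0.5$, $q=0.3$, $r=0.2$), but does not follow from $4(n_c-1)(p^2-p^4)>1$. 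Your divergence argument, including the separate treatment of $r=1$ (which the paper's limit computation silently excludes), is fine; only the monotonicity claim is affected.
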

\begin{proof}
We have that
\[
\textstyle \snr{\proxyZparam{d}} = 
\frac{
2(n_c - 1)p(p - q) + dr^2
}{
\sqrt{2(n_c - 1)(p^2(1 - p^2) + pq(1 - pq)) + dr^2(1 - r^2)}
}.
\]
We can treat this function as continuous in $d$. Then
\[
\textstyle \lim_{d \to \infty}\snr{\proxyZparam{d}} = \lim_{d \to \infty}\frac{dr^2}{\sqrt{dr^2(1 - r^2)}} = \infty.
\]
Similarly, we can compute the derivative with respect to $d$:
\[
\textstyle \frac{\partial}{\partial d}\snr{\proxyZparam{d}} = 
\frac{
r^2\sqrt{\var{\proxyZparam{d}}} - \frac{1}{2}r^2(1 - r^2) / \sqrt{\var{\proxyZparam{d}}}
}{
\var{\proxyZparam{d}}
}
\]
The derivative is positive provided that
$r^2\var{\proxyZparam{d}} > \frac{1}{2}r^2(1 - r^2)$,
which is true if $\var{\proxyZparam{0}} > \frac{1}{2}$
since $\var{\proxyZparam{0}}$ is monotonically increasing in $d$. 
We have that
\[
\var{\proxyZparam{0}} = 2(n_c - 1)(p^2(1 - p^2) + pq(1 - pq)) > 2(n_c - 1)(p^2 - p^4).
\]
Thus, the result holds provided $4(n_c - 1)(p^2 - p^4) > 1$.
\end{proof}

The result is again intuitive.
By setting $s = 0$, the pair of nodes in different blocks ($w$ and $z$) get no additional
common neighbors from the fringe, whereas the pair of nodes in the
same block ($u$ and $v$) get additional common neighbors. This should only help
our prediction performance, which is why the SNR monotonically
increases. We confirm this result numerically (\cref{fig:SBM_A}), where
we use the same parameters as the experiment described above.
We can also check that the conditions of \cref{lem:SBM_all_fringe} are met:
$n_c = 10$ and $p = 0.5$, so $4(n_c - 1)(p^2 - p^4) = 6.75 > 1$.

The SBM can also exhibit additional behaviors that we observed
in \cref{sec:empirical}. For example, with the email-Enron-4 dataset,
prediction performance initially decreased as we included more fringe
and then began to increase (\cref{fig:perf_enron_D,fig:perf_enron_H}).
The following lemma says that the core-fringe SBM can capture this behavior.
\begin{lemma}[Enough-fringe optimality]\label{lem:SBM_enough_fringe}
Let $p$, $q$, and $r$ be given. Then there exists a value of $s$
in the core-fringe SBM
such that $\snr{\proxyZparam{d}}$ initially decreases and then
increases without bound.
\end{lemma}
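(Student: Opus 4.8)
The plan is to reduce the claim to an elementary analysis of a single scalar function of $d$, exploiting that both the mean and the variance of $\proxyZparam{d}$ are affine in $d$. First I would use independence of $\proxyXparam{d}$ and $\proxyYparam{d}$ together with the stated moment formulas to write $\expec{\proxyZparam{d}} = A + dB$ and $\var{\proxyZparam{d}} = V_0 + dW$, where $A = 2(n_c-1)p(p-q)$, $B = (r-s)^2$, $V_0 = 2(n_c-1)(p^2(1-p^2)+pq(1-pq))$, and $W = r^2(1-r^2)+s^2(1-s^2)+2rs(1-rs)$. The key algebraic point is that the per-fringe increment to the mean collapses to the perfect square $(r-s)^2$, so it is strictly positive precisely when $s < r$; that is the regime I will use. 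Note also $A, V_0 > 0$ (since $q < p$) and $W > 0$ (since $r > 0$).

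Next I would treat $\snr{\proxyZparam{d}} = (A+dB)/\sqrt{V_0+dW}$ as a differentiable function of a continuous variable and differentiate. A short calculation shows that the sign of the derivative is governed by the linear function $g(d) = BV_0 - \tfrac{1}{2}AW + \tfrac{1}{2}dBW$. As long as $B > 0$ and $W > 0$, this $g$ is strictly increasing and has a unique root $d^\star = (AW - 2BV_0)/(BW)$, so $\snr{\proxyZparam{d}}$ is decreasing on $[0, d^\star]$ and increasing on $[d^\star, \infty)$; and because $B > 0$ we also get $\snr{\proxyZparam{d}} \to \infty$ as $d \to \infty$, since the leading behavior is $B\sqrt{d}/\sqrt{W}$. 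Thus the whole statement reduces to arranging that the vertex $d^\star$ is strictly positive, equivalently that $g(0) < 0$, i.e., the single inequality $2BV_0 < AW$.

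Finally I would choose $s$ to meet this inequality while keeping $s < r$, which I would do by letting $s \to r^-$. Then $B = (r-s)^2 \to 0$, so the left-hand side $2BV_0 \to 0$, whereas the right-hand side tends to $A \cdot W(r) = A \cdot 4r^2(1-r^2)$, a fixed positive constant (using $0 < r < 1$ and $A > 0$). By continuity there is an $\varepsilon > 0$ such that every $s \in (r-\varepsilon, r)$ satisfies both $s < r$ and $2BV_0 < AW$; any such $s$ gives the desired behavior. As a bonus, in this same limit $d^\star = (AW - 2BV_0)/(BW) \to \infty$, so the initial decreasing region is long and the behavior is visible at integer values of $d$, not merely in the continuous relaxation.

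I do not expect a genuine obstacle, since the affine structure of the moments turns the SNR derivative into a linear sign condition. The one point requiring care is the edge behavior as $s \to r$: I must check that $AW$ stays bounded away from zero rather than vanishing together with $B$, which is exactly why I keep $0 < r < 1$ so that $W(r) = 4r^2(1-r^2) > 0$. If an explicit threshold on $s$ were wanted instead of an existence argument, the only extra work would be to solve the quadratic inequality $2(r-s)^2 V_0 < A\,W(s)$ for $s$, but the limiting argument already suffices for the claim as stated.
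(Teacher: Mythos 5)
Your proof is correct and takes essentially the same route as the paper's: both write $\expec{\proxyZparam{d}}$ and $\var{\proxyZparam{d}}$ as affine functions of $d$ (with identical constants, the paper's $\alpha,\beta,\gamma,\delta$ being your $A,B,V_0,W$), differentiate the SNR in the continuous relaxation, identify the unique root $d^\star = A/B - 2V_0/W$ of the linear numerator, and make it strictly positive by letting $s \to r^-$ so that $B = (r-s)^2 \to 0$ while the variance increment stays bounded away from zero. Your explicit caveat that $0 < r < 1$ is needed so that $W \to 4r^2(1-r^2) > 0$ is a point the paper leaves implicit (its claim that $\delta$ is bounded away from zero would fail at $r=1$), but this is a refinement of the same argument, not a different one.
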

\begin{proof}
To simplify notation, consider the following constants:
$\alpha = \expec{\proxyZparam{0}} = 2(n_c - 1)(p^2 - pq)$,
$\beta = (r - s)^2$,
$\gamma = \var{\proxyZparam{0}} = 2(n_c - 1)[p^2(1 - p^2) + pq(1 - pq)]$, and
$\delta = r^2(1 - r^2) + s^2(1 - s^2) + 2rs(1 - rs)$.
With this notation, $\snr{\proxyZparam{d}} = (\alpha + \beta d) / \sqrt{\gamma + \delta d}$.
Treating this as a continuous function in $d$, the derivative is:
$\frac{\partial}{\partial d} \snr{\proxyZparam{d}} 
= (-\alpha\delta + 2\beta\gamma + \beta\delta d) / (2(\gamma + \delta d)^{3/2})$.
For any $s$, we can choose a sufficiently large $D$ such that
the derivative is positive when $d > D$, meaning that $\snr{\proxyZparam{d}}$ is increasing.
Furthermore, $\snr{\proxyZparam{d}}$ grows as $O(\sqrt{d})$.
It is easy to check that the derivative also has at most one root, 
$d_0 = \alpha/\beta - 2\gamma/\delta$.
We claim that $d_0$ can be made as large as desired.
By setting $s$ sufficiently close to $r$, $\beta$ approaches $0$,
while $\delta$ is bounded away from $0$. 
The remaining terms are positive constants.
Finally, when $d = 0$, the value of the derivative is
$(-\alpha\delta + 2\beta\gamma) / (2\gamma^{3/2})$.
Again, we can make $s$ sufficiently close to $r$ so that $\beta$ approaches $0$
and the derivative is negative at $d = 0$.
Therefore, there exists an $s$ such that 
its derivative has one root $d_0 \ge 1$,
$\snr{\proxyZparam{d}}$ decreases for small enough $d$ and
eventually increases without bound.
\end{proof}
By setting $n_c = 10$, $p = 0.5$, $q = 0.3$, $r = 0.2$, and $s = 0.1$,
we see the behavior described by \cref{lem:SBM_enough_fringe}---the
SNR initially decreases with additional fringe but then
increases monotonically (\cref{fig:SBM_B}).

By extending the SBM to include a third fringe block,
we can also have a case where an intermediate amount of core is optimal.
We argue informally as follows.
We begin with a setup as in \cref{lem:SBM_all_fringe}, where $s = 0$.
Including all of the fringe available in these blocks is optimal.
We then add a third fringe block that connects with equal probability to
the two core blocks.
By the arguments in \cref{lem:SBM_no_fringe}, this only hurts the SNR.
Thus, it is optimal to include two of the three fringe blocks, which is an
intermediate amount of fringe.

\subsection{Small-world lattice models}

In the one-dimensional small-world lattice model~\cite{Kleinberg-2006-ICM},
there is a node for each integer in $\mathbb{Z}$ and a parameter $\alpha \ge 0$.
The probability that edge $(i, j)$ exists is
\begin{align}\label{eq:1d_lattice_prob}
\textstyle \prob{(i, j) \in E} = \frac{1}{\vert j - i \vert^{\alpha}}.
\end{align}
We start with a core of size $2c + 1$, centered around 0:
$\nodeparam{0} = C = \{-c, \ldots, c\}$.
We then sample two nodes $v$ and $w$ such that
\begin{equation}\label{eq:sample}
u = -c < v < w < c = z \text{ and } 2 \le v - u < z - w.
\end{equation}
In our language at the beginning of \cref{sec:theory}, $X$ is still
the random variable that edge $(u, v)$ exists and $Y$ is the random variable that
edge $(w, z)$ exists. By our assumptions and \cref{eq:1d_lattice_prob}, 
we know that $\prob{X} > \prob{Y}$. However, we will again assume that we are
only given access to the number of common neighbors through
the proxy random variables $\proxyX$ and $\proxyY$.

Our parameterization of the proxy measurements are a distance $d$ that we examine
beyond the core. Specifically, the nested sequence of vertex sets that incorporate fringe
information is given by $\nodeparam{d} = \{-(c + d), \ldots, c + d\}$,
and our proxy measurements are
\begin{align*}
  \proxyXparam{d} &= \lvert \{ s \in \nodeparam{d} \given (u, s) \text{ and } (v, s) \text{ are edges} \} \rvert \\
  \proxyYparam{d} &= \lvert \{ s \in \nodeparam{d} \given (w, s) \text{ and } (z, s) \text{ are edges} \} \rvert. 
\end{align*}
We will analyze the random variable $\proxyZparam{d} = \proxyXparam{d} - \proxyYparam{d}$.
We correctly predict that $(u, v)$ is more likely than $(w, z)$ to exist if $\proxyZparam{d} > 0$.
As argued above, our goal is to find a $d$ that maximizes $\snr{\proxyZparam{d}}$.

We focus our analysis on the case of $\alpha = 1$ in \cref{eq:1d_lattice_prob}.
Let $A_s$ be the indicator random variable that node $s$ is
a common neighbor of nodes $u$ and $v$, 
for $s \in \mathbb{Z} \backslash \{u, v\}$,
and let $B_s$ be the indicator random variable that node $s$ is a common
neighbor of $w$ and $z$, for $s \in \mathbb{Z} \backslash \{w, z\}$.
Since $u = -c$ and $z = c$, our proxy measurements are
\begin{align}
\proxyXparam{d} &= \textstyle \sum_{s = -(c+d)}^{-(c+1)}A_s + \sum_{s = -c+1}^{v-1}A_s + \sum_{s = v+1}^{c+d}A_s \label{eq:1DL_proxyX} \\
\proxyYparam{d} &= \textstyle \sum_{s = -(c+d)}^{w-1}B_s + \sum_{s = w+1}^{c-1}B_s + \sum_{s = c+1}^{c+d}B_s \label{eq:1DL_proxyX}
\end{align}
Define the independent indicator random variables 
$I_{s,r}$ and $J_{s,r}$ where
$\prob{I_{s,r} = 1} = 1 / (s(s + r))$ and
$\prob{J_{s,r} = 1} = 1 / (s(r - s))$.
Now we can re-write the expressions for $\proxyXparam{d}$ and $\proxyYparam{d}$ as follows:
\begin{align}
\proxyXparam{d} &= \textstyle 
    \sum_{s = 1}^{d}I_{s,v-u} 
+  \sum_{s = 1}^{c + d - v}I_{s,v-u}
+  \sum_{s = 1}^{v - u - 1}J_{s,v-u} \label{eq:indicators_X}\\
\proxyYparam{d} &= \textstyle 
    \sum_{s = 1}^{d}I_{s,z-w}
+  \sum_{s = 1}^{w - c - d}I_{s,z-w}
+  \sum_{s = 1}^{z - w - 1}J_{s,z-w}.
\end{align}
The expectations are given by
\begin{align*}
\expec{\proxyXparam{d}} &= \textstyle 
    \sum_{s = 1}^{d}\frac{1}{s(s+v-u)}
+  \sum_{s = 1}^{c + d - v}\frac{1}{s(s+v-u)}
+  \sum_{s = 1}^{v - u - 1}\frac{1}{s(v - u - s)} \\
\expec{\proxyYparam{d}} &= \textstyle
    \sum_{s = 1}^{d}\frac{1}{s(s+z-w)}
+  \sum_{s = 1}^{w - c - d}\frac{1}{s(s+z-w)}
+  \sum_{s = 1}^{z - w - 1}\frac{1}{s(z-w-s)}.
\end{align*}

With these expressions, we can now analyze how $\proxyZparam{d}$
behaves as we vary $d$.
The following lemma establishes that $\proxyZparam{d}$ converges to a positive
value. Later, we use this to show that the SNR also
converges to a positive value.
\begin{lemma}\label{lem:1DL_pos}
$\lim_{d \to \infty} \expec{\proxyZparam{d}} = Z^* > 0$.
\end{lemma}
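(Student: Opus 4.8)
The plan is to work directly with the closed forms for $\expec{\proxyXparam{d}}$ and $\expec{\proxyYparam{d}}$ stated just above, show that each converges as $d \to \infty$, and thereby reduce the whole lemma to comparing two constants. Writing $a = v - u$ and $b = z - w$, the sampling assumption \cref{eq:sample} gives $2 \le a < b$. Since $\expec{\proxyZparam{d}} = \expec{\proxyXparam{d}} - \expec{\proxyYparam{d}}$ by linearity, once both expectations are shown to converge we have $Z^* = X^* - Y^*$, where $X^* = \lim_{d\to\infty}\expec{\proxyXparam{d}}$ and $Y^* = \lim_{d\to\infty}\expec{\proxyYparam{d}}$, and it remains only to prove $X^* > Y^*$.

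First I would establish convergence and evaluate the limits. In each expectation, two of the three sums are of $I$-type, with summand bounded by $\tfrac{1}{s(s+a)} \le \tfrac{1}{s^{2}}$ and with upper limits ($d$ and $c + d - v$) that both tend to infinity; these therefore converge to the full series $\sum_{s \ge 1}\tfrac{1}{s(s+a)}$. The remaining $J$-type sum does not depend on $d$. Using the partial-fraction identities $\tfrac{1}{s(s+a)} = \tfrac1a\bigl(\tfrac1s - \tfrac1{s+a}\bigr)$ and $\tfrac{1}{s(a-s)} = \tfrac1a\bigl(\tfrac1s + \tfrac1{a-s}\bigr)$, both sums telescope into harmonic numbers, giving $\sum_{s\ge1}\tfrac{1}{s(s+a)} = \tfrac{H_a}{a}$ and $\sum_{s=1}^{a-1}\tfrac{1}{s(a-s)} = \tfrac{2H_{a-1}}{a}$. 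Collecting the two $I$-sums and the single $J$-sum yields the clean closed form $X^* = \tfrac{2(H_a + H_{a-1})}{a}$, and likewise $Y^* = \tfrac{2(H_b + H_{b-1})}{b}$.

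The crux is to show that $h(a) \triangleq \tfrac{2(H_a + H_{a-1})}{a}$ is strictly decreasing in $a$ for $a \ge 2$, since this together with $a < b$ gives $X^* = h(a) > h(b) = Y^*$, and hence $Z^* > 0$. I would split $h(a) = 2\tfrac{H_a}{a} + 2\tfrac{H_{a-1}}{a}$ and treat each ratio separately using $H_{a+1} = H_a + \tfrac{1}{a+1}$. The difference $\tfrac{H_a}{a} - \tfrac{H_{a+1}}{a+1}$ reduces to $\tfrac{H_a - a/(a+1)}{a(a+1)}$, which is strictly positive because $H_a \ge 1 > \tfrac{a}{a+1}$; and $\tfrac{H_{a-1}}{a} - \tfrac{H_a}{a+1}$ reduces to $\tfrac{H_{a-1} - 1}{a(a+1)} \ge 0$ for $a \ge 2$. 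The first ratio is thus strictly decreasing and the second non-increasing, so their sum $h$ is strictly decreasing.

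I expect the last monotonicity step to be the main obstacle: the convergence and telescoping are routine, but showing that the harmonic-ratio expression genuinely decreases (rather than merely staying bounded) is where the argument could stall. Splitting $h$ into the two ratios and reducing each to an elementary harmonic-number inequality is what makes this go through cleanly, and it is this decay of $X^*$ with distance that encodes the intuition that the farther-apart pair $\{w,z\}$ accumulates fewer common neighbors in the limit.
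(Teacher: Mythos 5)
Your proof is correct, and after the shared first step it takes a genuinely different route from the paper's. Both arguments compute the limits by telescoping the series, and your closed form $X^* = 2(H_a + H_{a-1})/a$ agrees with the paper's digamma expression $2(2\psi(a) + 1/a + 2\gamma)/a$ via the identity $\psi(a) = H_{a-1} - \gamma$. From there the paper treats $X^* > Y^*$ as a two-variable inequality in $(a,b)$ and handles it by cases: for $a = 2$ it verifies the inequality numerically at $(a,b) = (2,3)$ and argues the left-hand side is monotone in $b$; for $b > a \ge 3$ it invokes Puiseux-series bounds to replace $\psi(x) + 1/(2x)$ by $\log x$ up to explicit error terms, reducing to the logarithmic inequality $0.99\, b\log a + \gamma > 1.01\, a \log b$. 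You instead prove the single-variable statement that $h(a) = 2(H_a + H_{a-1})/a$ is strictly decreasing over integers $a \ge 2$, via the two reductions $\frac{H_a}{a} - \frac{H_{a+1}}{a+1} = \frac{H_a - a/(a+1)}{a(a+1)} > 0$ and $\frac{H_{a-1}}{a} - \frac{H_a}{a+1} = \frac{H_{a-1}-1}{a(a+1)} \ge 0$, both of which check out; strictness of the first gives $h(a) > h(b)$, hence $Z^* > 0$, for all integers $2 \le a < b$. Your route buys a uniform argument with no case split, no numerically verified base case, and no asymptotic expansion of the digamma function, so it is more elementary and arguably more airtight than the paper's (whose numerical check at $(2,3)$ and the $0.99$/$1.01$ approximation step are its least rigorous points). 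What the paper's asymptotic form buys is an explicit sense of scale---$X^*$ behaves like $4\log(a)/a$---which foreshadows the decay estimates used in \cref{lem:ev_zero} and \cref{thm:decrease}, but for establishing this lemma on its own your monotonicity argument is the cleaner one.
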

\begin{proof}
Let $a = v - u$. Then
$\lim_{d \to \infty} \expec{\proxyXparam{d}} = 
2\sum_{s=1}^{\infty} \frac{1}{s(s + a)} + \sum_{s = 1}^{a - 1}\frac{1}{s(a-s)} 
=
2(\psi(a + 1) + \psi(a) + 2\gamma) / a = 2(2\psi(a) + 1 / a + 2\gamma) / a$,
where $\psi(\cdot)$ is the digamma function.
Similarly, if $b = z - w$, then
\[
\lim_{d \to \infty} \expec{\proxyYparam{d}} = 2(2\psi(b) + 1 / b + 2\gamma) / b.
\]
Thus, 
$Z^* = \lim_{d \to \infty}\expec{\proxyXparam{d}} - \expec{\proxyYparam{d}}$
exists, and $Z^* > 0$ if and only if
\begin{align}
b(\psi(a) + 1 / (2a) + \gamma) - a(\psi(b) + 1 / (2b) + \gamma) > 0 \label{eq:gamma_ineq}
\end{align}
Recall that by \cref{eq:sample}, $2 \le a < b$.
Numerically, \cref{eq:gamma_ineq} holds for $(a, b) = (2, 3)$.
Since the left-hand-side monotonically increases in $b$, this inequality holds for $a = 2$.

Now assume $b > a \ge 3$.
The Puiseux series expansion of $\psi$ at $\infty$ gives
$\psi(x) + 1 / (2x) \in \log(x) \pm \frac{1}{12(x^2 - 1)}$.
Thus, it is sufficient to show that
$b(\log(a) - 1 / 96) + (b - a)\gamma > a(\log(b) + 1 / 180)$,
or that $0.99b\log(a) + \gamma > 1.01a\log(b)$, which holds
for $b > a \ge 3$.
\end{proof}

The next theorem shows that the signal-to-noise ratio converges
to a positive value. Thus, by measuring enough fringe, our proxy
measurements are at least providing the correct direction of information.
However, the SNR converges, so at some point, our information saturates.
\begin{theorem}[SNR saturation]\label{thm:SNR_conv}
$\lim_{d \to \infty} \snr{\proxyZparam{d}} = S^* > 0$.
\end{theorem}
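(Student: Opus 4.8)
The plan is to reduce the claim to the convergence of the denominator, since \cref{lem:1DL_pos} already supplies $\lim_{d\to\infty}\expec{\proxyZparam{d}} = Z^* > 0$. Writing $\snr{\proxyZparam{d}} = \expec{\proxyZparam{d}}/\sqrt{\var{\proxyZparam{d}}}$, it suffices to show that $\var{\proxyZparam{d}}$ converges to a finite, strictly positive limit $V^*$; then, because $x \mapsto Z^*/\sqrt{x}$ is continuous on $(0,\infty)$, the ratio converges to $S^* = Z^*/\sqrt{V^*}$, which is well-defined and positive.

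First I would compute the variance from the independent-indicator representation in \cref{eq:indicators_X}. The common-neighbor indicators $A_s$ (resp.\ $B_s$) involve disjoint edges for distinct $s$, so each of $\proxyXparam{d}$ and $\proxyYparam{d}$ is a sum of independent Bernoulli variables and $\var{\proxyXparam{d}} = \sum_s \var{A_s}$, and likewise for $\proxyYparam{d}$. Moreover, in the rewriting the two proxy counts use the disjoint index families $\{I_{s,v-u}, J_{s,v-u}\}$ and $\{I_{s,z-w}, J_{s,z-w}\}$, which are declared independent; hence $\var{\proxyZparam{d}} = \var{\proxyXparam{d}} + \var{\proxyYparam{d}}$.

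Next I would establish convergence of each variance term. With $a = v-u$, the fringe contribution to $\var{\proxyXparam{d}}$ is $\sum_{s=1}^{d}\tfrac{1}{s(s+a)}\bigl(1-\tfrac{1}{s(s+a)}\bigr) + \sum_{s=1}^{c+d-v}\tfrac{1}{s(s+a)}\bigl(1-\tfrac{1}{s(s+a)}\bigr)$, plus the fixed, $d$-independent middle sum over the $J_{s,a}$ terms. Since $\tfrac{1}{s(s+a)} \le s^{-2}$, each per-term variance is $O(s^{-2})$, so both fringe series converge absolutely as $d\to\infty$; the identical argument applies to $\var{\proxyYparam{d}}$ with $b = z-w$. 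Thus $\var{\proxyXparam{d}}$ and $\var{\proxyYparam{d}}$ each converge to a finite limit, and so does $\var{\proxyZparam{d}}$. Positivity of the limit is immediate: even the single $s=1$ term of a left-fringe series contributes a strictly positive variance, so $V^* > 0$.

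The main obstacle is a modeling subtlety hidden in the independence step. Strictly in the underlying lattice, $\proxyXparam{d}$ and $\proxyYparam{d}$ are not perfectly independent, because the four edges among $u,v,w,z$ are shared between the two common-neighbor counts (e.g.\ edge $(v,w)$ enters both $A_w$ and $B_v$). I would handle this by observing that the coupling is confined to the finitely many terms $A_w, A_z, B_u, B_v$ and does not grow with $d$: the covariance $\mathrm{Cov}(\proxyXparam{d}, \proxyYparam{d})$ is therefore bounded and eventually constant in $d$, and peeling off these finitely many terms leaves an ``independent bulk'' whose variance already carries the convergent, strictly positive limit. Either way---using the independent-indicator convention of \cref{eq:indicators_X} or tracking the finite correction---the limiting variance $V^*$ is finite and positive, which together with \cref{lem:1DL_pos} yields $\snr{\proxyZparam{d}} \to S^* = Z^*/\sqrt{V^*} > 0$.
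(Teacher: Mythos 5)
Your proposal is correct and takes essentially the same route as the paper's proof: reduce via \cref{lem:1DL_pos} to showing that $\var{\proxyZparam{d}}$ converges, then observe that each indicator variance is $O(s^{-2})$ so the variance series converges. You are in fact slightly more careful than the paper, which silently treats $\proxyXparam{d}$ and $\proxyYparam{d}$ as independent via the $I_{s,r}$ convention and never remarks that the limiting variance is strictly positive; your finite-covariance correction (the shared edges among $u,v,w,z$ contribute a covariance that is constant in $d$) and your positivity observation close those small gaps without changing the argument.
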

\begin{proof}
By \cref{lem:1DL_pos}, $\expec{\proxyZparam{d}}$ converges to a positive value.
Thus, it is sufficient to show that $\var{\proxyZparam{d}}$ converges.
Following \cref{eq:indicators_X},
\[
\textstyle
\var{\sum_{s=1}^{\infty} I_{s,v-u}} 
= \textstyle \sum_{s=1}^{\infty}\var{I_{s,v-u}} 
= \textstyle \sum_{s=1}^{\infty}\frac{1}{s(s + v-u)} - \frac{1}{(s(s + v-u))^{2}}
\]
by independence, and converges.
\end{proof}

\noindent The random variable 
$W_{d + 1, J} \triangleq \proxyZparam{d + J} - \proxyZparam{d} = \sum_{k=d + 1}^{k=d+1+J}I_{k,v-u} - I_{k,z-w}$
will be useful for our subsequent analysis. This is the additional measurement available to us 
if we measured at distance $d + J$ instead of $d$.
The next lemma says that, as we increase $d$,
the expectation of $W_{d,J}$ goes to zero faster than its variance.
\begin{lemma}\label{lem:ev_zero}
For any $J$, $\lim_{d \to \infty} \expec{W_{d,J}} / \var{W_{d,J}} = 0$.
\end{lemma}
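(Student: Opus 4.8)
The plan is to exploit that $W_{d,J}$ is, up to a harmless shift of the summation index, a sum of only a constant number of terms (of order $J$) of the form $I_{k,a} - I_{k,b}$, where $a = v-u$ and $b = z-w$ with $2 \le a < b$ by \cref{eq:sample}, and where $k$ ranges over a window of width $O(J)$ around $d$. Because the number of summands stays bounded while $k = \Theta(d)$ throughout the window, both $\expec{W_{d,J}}$ and $\var{W_{d,J}}$ can be controlled term by term, and the lemma reduces to comparing the decay rate of a single expectation against that of a single variance.

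First I would compute the per-term expectation. By linearity and the definition $\prob{I_{k,r}=1} = 1/(k(k+r))$,
\[
\expec{I_{k,a} - I_{k,b}} = \frac{1}{k(k+a)} - \frac{1}{k(k+b)} = \frac{b-a}{k(k+a)(k+b)},
\]
which is $\Theta(k^{-3})$ as $k \to \infty$. Summing the $\Theta(1)$ terms in the window near $d$ gives $\expec{W_{d,J}} = \Theta(d^{-3})$; in particular it is positive (since $b > a$) and decays like the cube of $d$. Next I would bound the variance using independence of $I_{k,a}$ and $I_{k,b}$, so that each term contributes $\var{I_{k,a}} + \var{I_{k,b}}$. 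Since
\[
\var{I_{k,r}} = \frac{1}{k(k+r)}\left(1 - \frac{1}{k(k+r)}\right) \sim \frac{1}{k^2}
\]
(the $p(1-p)$ variance of a Bernoulli with small success probability $p = \Theta(k^{-2})$), each term is $\Theta(k^{-2})$, and hence $\var{W_{d,J}} = \Theta(d^{-2})$.

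The conclusion then follows by taking the ratio: $\expec{W_{d,J}}/\var{W_{d,J}} = \Theta(d^{-3})/\Theta(d^{-2}) = \Theta(d^{-1}) \to 0$. The one point requiring care---and the reason the statement holds at all---is the extra power of $d$ gained in the expectation through the cancellation in $1/(k(k+a)) - 1/(k(k+b))$, which yields a $k^{-3}$ numerator, whereas no such cancellation occurs in the variance, where each indicator contributes its full $\Theta(k^{-2})$ mass. I expect the main (and only minor) obstacle to be making the $\Theta$ estimates rigorous uniformly over the window, but since the window has bounded width this is immediate: each summand is sandwiched between its values at the endpoints $k = d$ and $k = d + O(J)$, both of which are $\Theta(d^{-3})$ for the expectation and $\Theta(d^{-2})$ for the variance.
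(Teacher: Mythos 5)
Your proposal is correct and follows essentially the same route as the paper's proof: decompose $W_{d,J}$ into independent indicator differences, use the cancellation $\frac{1}{k(k+a)} - \frac{1}{k(k+b)} = \frac{b-a}{k(k+a)(k+b)}$ to get a per-term expectation of order $k^{-3}$, and compare against the per-term variance of order $k^{-2}$. If anything, your version is slightly tighter, since you state the variance bound as $\Theta(d^{-2})$ (a lower bound being what the denominator actually requires), whereas the paper only writes $O(\sum 1/k^2)$ and leaves the matching lower bound implicit.
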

\begin{proof}
Let $a = v - u$ and $b = z - w$. By independence,
\begin{align*}
\expec{W_{d,J}} 
&= \textstyle \sum_{k=d}^{k=d+J}\frac{1}{k(k + a)} - \frac{1}{k(k + b)} \\
&= \textstyle \sum_{k=d}^{k=d+J}\frac{b-a}{k(k+a)(k+b)} = \textstyle O(\sum_{k=d}^{k=d+J}1 / k^3).
\end{align*}
For large enough $d$,
$\var{W_{d,J}} = \textstyle \sum_{k=d}^{k=d+J}\frac{1}{k(k + a)} + \frac{1}{k(k + b)} - \frac{1}{(k(k + b))^2} - \frac{1}{(k(k + a))^2}$,
which is $O(\sum_{k=d}^{k=d+J}1 / k^2)$.
\end{proof}
\noindent The next theorem now shows that in the one-dimensional lattice model,
the signal-to-noise ratio eventually begins to decrease.
This means that at some point, the noise overwhelms the signal.
Thus, it will never be best to gather as much fringe as possible.
\begin{theorem}\label{thm:decrease}
There exists a $D$ for which $\snr{\proxyZparam{D}} > \snr{\proxyZparam{D + j}}$ for any $j > 0$.
\end{theorem}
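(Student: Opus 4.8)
The plan is to prove the stronger statement that the signal-to-noise ratio is \emph{eventually strictly decreasing in single steps}: there is a threshold $D_0$ such that $\snr{\proxyZparam{d+1}} < \snr{\proxyZparam{d}}$ for every $d \ge D_0$. This immediately yields the theorem, since then for $D = D_0$ and any $j > 0$ we obtain a strictly decreasing chain $\snr{\proxyZparam{D}} > \snr{\proxyZparam{D+1}} > \cdots > \snr{\proxyZparam{D+j}}$. This reduction is the conceptual crux: the theorem demands a single $D$ that dominates \emph{all} larger indices at once, and eventual monotonicity is the cleanest way to certify this, whereas a purely local ``the SNR decreases right here'' argument would not suffice.

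To analyze one step, I would write $\mu_d = \expec{\proxyZparam{d}}$ and $\sigma_d^2 = \var{\proxyZparam{d}}$, and let $W = W_{d+1,1} = \proxyZparam{d+1} - \proxyZparam{d}$ be the fresh increment. Because $W$ is built from new indicator variables independent of those defining $\proxyZparam{d}$, both mean and variance are additive: $\mu_{d+1} = \mu_d + \expec{W}$ and $\sigma_{d+1}^2 = \sigma_d^2 + \var{W}$ with $\var{W} > 0$. For large $d$ we have $\mu_d \to Z^* > 0$ by \cref{lem:1DL_pos} and $\expec{W} = \mu_{d+1} - \mu_d \to 0$, so every relevant quantity is positive and I may square the target inequality $\snr{\proxyZparam{d+1}} < \snr{\proxyZparam{d}}$. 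Expanding $(\mu_d + \expec{W})^2 \sigma_d^2 < \mu_d^2(\sigma_d^2 + \var{W})$ and cancelling the common $\mu_d^2\sigma_d^2$ term, the one-step decrease is equivalent to
\begin{equation*}
\frac{\expec{W}}{\var{W}}\bigl(2\mu_d + \expec{W}\bigr) < \frac{\mu_d^2}{\sigma_d^2}.
\end{equation*}

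Next I would take $d \to \infty$ in this inequality. The right-hand side converges to $(Z^*)^2 / V^* = (S^*)^2 > 0$, where $V^* = \lim_d \sigma_d^2 \in (0,\infty)$ is the finite positive limit established in the proof of \cref{thm:SNR_conv}. For the left-hand side, \cref{lem:ev_zero} applied with $J = 1$ gives $\expec{W}/\var{W} \to 0$, while the factor $2\mu_d + \expec{W}$ stays bounded (it converges to $2Z^*$); hence the left-hand side tends to $0$. Since $0 < (S^*)^2$, there is a $D_0$ beyond which the displayed inequality holds for all $d \ge D_0$, which is exactly the desired single-step decrease. The case $\expec{W} \le 0$ is subsumed, since then the left-hand side is nonpositive and the inequality is immediate; the squaring step is used only in the regime $\expec{W} > 0$.

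The main obstacle is conceptual rather than computational: recognizing that the global ``dominating $D$'' claim should be reduced to eventual monotone decrease, and then isolating the right object---the \emph{single}-step increment---to feed into \cref{lem:ev_zero}. The quantitative heart is the contrast between the per-step additive variance gain, which behaves like $\Theta(1/d^2)$, and the per-step mean gain, which is only $\Theta(1/d^3)$ because it is a difference of two nearly equal $\Theta(1/d^2)$ terms; this cancellation is precisely what \cref{lem:ev_zero} records. Against a baseline SNR that \cref{thm:SNR_conv} pins to a finite positive constant, the per-step noise eventually overwhelms the vanishing per-step signal, forcing the decrease.
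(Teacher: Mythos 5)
Your proof is correct, and its analytic engine is the same as the paper's: square the SNR comparison (legitimate since both means are eventually positive by \cref{lem:1DL_pos}), reduce the one-step decrease to the inequality $\frac{\expec{W}}{\var{W}}\bigl(2\mu_d + \expec{W}\bigr) < \mu_d^2/\sigma_d^2$, and conclude by noting the right side tends to $(S^*)^2 > 0$ (\cref{thm:SNR_conv}) while the left side tends to $0$ (\cref{lem:ev_zero}). The genuine difference is the reduction, and it buys something real. The paper works with the increment $W_{d+1,j}$ for a \emph{fixed} $j$ and lets $d \to \infty$; that yields, for each $j$, a threshold beyond which $\snr{\proxyZparam{d}} > \snr{\proxyZparam{d+j}}$, but the threshold a priori depends on $j$, whereas the theorem demands a single $D$ dominating \emph{all} $j$ simultaneously. (That gap in the paper's presentation is repairable, since $\expec{W_{d+1,j}}/\var{W_{d+1,j}}$ is in fact $O(1/d)$ uniformly in $j$ by a term-wise bound on the ratio of the two sums, but the paper never addresses the uniformity.) Your route instead establishes eventual \emph{single-step} monotone decrease, applying \cref{lem:ev_zero} only with $J = 1$, and then chains $\snr{\proxyZparam{D}} > \snr{\proxyZparam{D+1}} > \cdots > \snr{\proxyZparam{D+j}}$, which delivers the uniform-in-$j$ conclusion for free. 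One cosmetic remark: your case $\expec{W} \le 0$ is vacuous in this model, since $v - u < z - w$ forces $\expec{W_{d+1,1}} = \frac{1}{k(k+v-u)} - \frac{1}{k(k+z-w)} > 0$ for $k = c+d+1$; including it does no harm, but it could simply be dropped.
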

\begin{proof}
We have that $\snr{\proxyZparam{d}} > \snr{\proxyZparam{d + j}}$ if and only if
\begin{align*}
\textstyle \frac{\expec{\proxyZparam{d}}}{\sqrt{\var{\proxyZparam{d}}}} > \frac{\expec{\proxyZparam{d} + W_{d+1,j}}}{\sqrt{\var{\proxyZparam{d} + W_{d+1,j}}}} 
&\iff \textstyle \frac{\expec{\proxyZparam{d}}^2}{\var{\proxyZparam{d}}} > \frac{2\expec{\proxyZparam{d}}\expec{W_{d+1,j}} + \expec{W_{d+1,j}}^2}{\var{W_{d+1,j}}}.
\end{align*}
The second inequality above comes from 
squaring both sides of the first inequality;
both terms in the first inequality are positive for large enough $d$ by \cref{lem:1DL_pos}, 
so we can keep the direction of the inequality.
By \cref{thm:SNR_conv}, the left-hand-side of the inequality converges to a positive constant.
We claim that the right-hand-side converges to $0$.

\begin{figure}[t]
\includegraphics[width=\columnwidth]{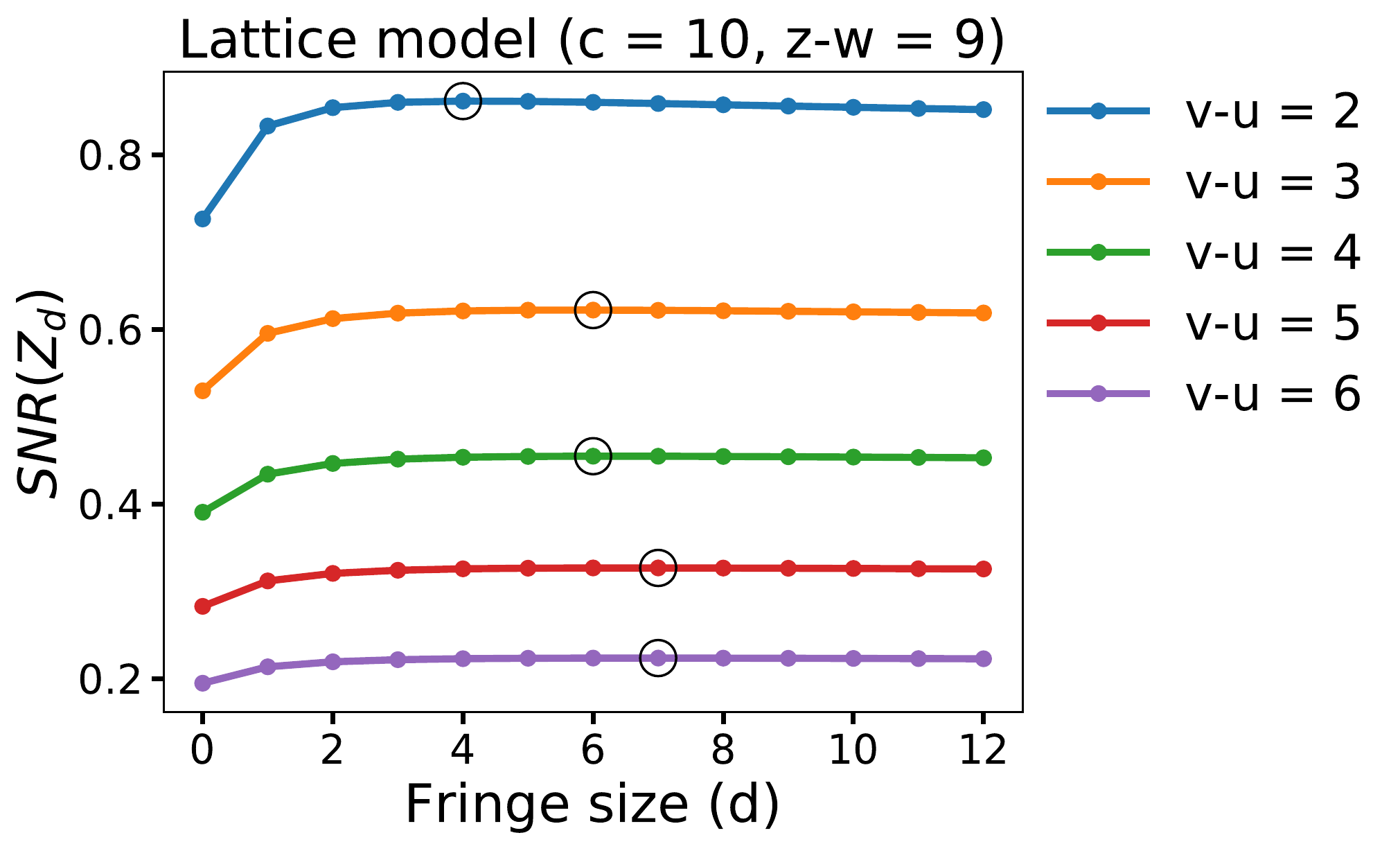}
\caption{Saturation and interior optima of the SNR
the one-dimensional small-world lattice model.
Nodes $v$ and $w$ are sampled from $\{u = -c, \ldots, c = z\}$
with $2 \le v - u < z - w$. The random variable $\proxyZparam{d}$
is the difference in the number of common neighbors of $\{u,v\}$
and $\{w,z\}$ on the node set $\{-(c + d), \ldots, c + d\}$.
Since $\snr{\proxyZparam{1}} > \snr{\proxyZparam{0}}$, 
an intermediate amount of fringe produces the optimal SNR
(\cref{cor:intermediate}); these optima are circled in black.
The SNR converges by \Cref{thm:SNR_conv}, and we indeed see the
consequent saturation.}
\label{fig:lattice_results}
\end{figure}

By \cref{lem:1DL_pos}, $\expec{\proxyZparam{d}}$ converges, so it must be bounded by a positive constant constant.
Furthermore, since $\proxyZparam{d + j} = \proxyZparam{d} + W_{d+1,j}$, we have that $\expec{W_{d+1,j}} \to 0$.
Combining these results,
$2\expec{\proxyZparam{d}}\expec{W_{d+1,j}} + \expec{W_{d+1,j}}^2 = O(\expec{W_{d+1,j}})$,
and we have that $\expec{W_{d+1,j}} / \var{W_{d+1,j}} \to 0$ by \cref{lem:ev_zero}.
\end{proof}
A consequence of this theorem is that if the SNR initially increases,
then an intermediate amount of fringe is optimal.
The reason is that 
the SNR initially increases but at some point begins to decrease monotonically (\cref{thm:decrease})
before converging to a positive value (\cref{thm:SNR_conv}).
We formalize this as follows.
\begin{corollary}[Intermediate-fringe optimality]\label{cor:intermediate}
  If\; $\snr{\proxyZparam{0}} < \snr{\proxyZparam{1}}$,
  then $d^* = \arg\max_{d}\snr{\proxyZparam{d}}$ satisfies $0 < d^* < \infty$.
\end{corollary}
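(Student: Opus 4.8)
The plan is to derive the corollary directly from the two theorems already established, treating $\snr{\proxyZparam{d}}$ simply as a real sequence indexed by $d \in \{0,1,2,\dots\}$ and reasoning about where its maximum sits. The hypothesis $\snr{\proxyZparam{0}} < \snr{\proxyZparam{1}}$ is what I will use to push $d^*$ strictly above $0$, while \cref{thm:decrease} is what I will use to keep $d^*$ strictly below $\infty$. No new estimates on $\expec{\proxyZparam{d}}$ or $\var{\proxyZparam{d}}$ should be needed.

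First I would handle the upper bound $d^* < \infty$. \cref{thm:decrease} furnishes a finite index $D$ with $\snr{\proxyZparam{D}} > \snr{\proxyZparam{D+j}}$ for every $j>0$. Reading this as a statement about the tail of the sequence, it says that every term strictly beyond $D$ is dominated by the term at $D$; hence $\sup_{d \ge D}\snr{\proxyZparam{d}} = \snr{\proxyZparam{D}}$, and this supremum is attained at the finite index $D$ itself. Since the remaining indices $\{0,1,\dots,D-1\}$ form a finite set on which a maximum is trivially attained, the global maximum $\max_{d \ge 0}\snr{\proxyZparam{d}}$ is attained at some finite index, so any maximizer $d^*$ is finite.

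Next I would handle the lower bound $d^* > 0$. Here the hypothesis does all the work: because $\snr{\proxyZparam{1}} > \snr{\proxyZparam{0}}$, the index $d = 0$ is strictly dominated by $d = 1$ and therefore cannot be a maximizer. Consequently every maximizer satisfies $d^* \ge 1$, and combining with the previous paragraph yields $0 < d^* < \infty$, as claimed.

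The one place that needs care -- and the step I view as the crux -- is the claim that the maximum is genuinely \emph{attained} rather than merely approached. A priori, since \cref{thm:SNR_conv} only tells us that $\snr{\proxyZparam{d}} \to S^* > 0$, one might worry that the SNR could creep upward toward its limit without ever realizing a largest value, leaving the $\arg\max$ empty or effectively ``at infinity.'' The role of \cref{thm:decrease} is precisely to exclude this pathology: by exhibiting a single index $D$ that strictly dominates its entire tail, it guarantees that the supremum over $d \ge D$ is realized at $D$, which is exactly what collapses the optimization to a maximum over a finite set. So the substance of the corollary is bookkeeping around these two theorems, with the attainment argument being the only subtlety worth spelling out carefully.
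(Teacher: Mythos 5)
Your proof is correct and takes essentially the same route as the paper, which justifies the corollary in one informal sentence: the hypothesis rules out $d^* = 0$, while \cref{thm:decrease} rules out the tail (the paper also cites \cref{thm:SNR_conv}, which, as you note, is not strictly needed). Your explicit handling of attainment---that the index $D$ from \cref{thm:decrease} strictly dominates all later terms, so the optimization collapses to a maximum over the finite set $\{0,\dots,D\}$---is just a careful spelling-out of what the paper leaves implicit.
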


Numerically, $\snr{\proxyZparam{0}} < \snr{\proxyZparam{1}}$ in several cases (\cref{fig:lattice_results}).
In this experiment, we fix $c = 10$ and $w = 1$ (so $z - w = 9$)
and vary the amount of fringe from $0$ to $12$ nodes on either end of the core.
We also vary $v$ so that $v - u \in \{2, 3, 4, 5, 6\}$.
We observe two phenomena consistent with our theory.
First, by \cref{cor:intermediate}, an intermediate amount of fringe
information should be optimal; indeed, this is the case.
Second, by \cref{thm:SNR_conv}, the SNR
converges, indicating that saturation should kick in at some finite fringe size. This
is true in our experiments, where saturation occurs after around  $d = 8$.

\section{Discussion}
Link prediction is a cornerstone problem in network
science~\cite{LibenNowell-2007-link-pred,Lu-2011-link-pred},
and the models for prediction include those that are
mechanistic~\cite{Barabasi-2002-evolution},
statistical~\cite{Clauset-2008-hierarchical},
or implicitly captured by a principled heuristic~\cite{Adamic-2003-friends,Backstrom-2011-supervised}.
The major difference in our work is that we explicitly study the consequences
of a common dataset collection process that results in core-fringe structure.
Most related to our analysis of random graph models are theoretical justifications
of principled heuristics such as the number of common neighbors in
latent space models~\cite{Sarkar-2011-justifications}
and in general stochastic block models~\cite{Sarkar-2015-consistency}.

The core-fringe structure that we study can be interpreted as an extreme case of
core-periphery structure in complex
networks~\cite{Borgatti-2000-CP,Holme-2005-CP,Rombach-2017-CP,Zhang-2015-SBM-CP}.
In more classical social and economic network analysis, core-periphery structure
is a consequence of differential
status~\cite{Doreian-1985-structural,Laumann-1976-collective}. In this paper,
the structure emerges from data collection mechanisms, which raises
new research questions of the kind that we have addressed.
However, our results hint that periphery nodes could also be noisy sources of information
and possibly warrant omission in standard link prediction.
Our fringe measurements can also be viewed as adding noisy training
data, which is related to training data augmentation methods~\cite{Ratner-2016-data,Ratner-2017-Snorkel}.

Conventional machine learning wisdom says that more data generally
helps make better predictions. We showed that this is far from true in the common
problem of network link prediction, where additional data comes from
observing how some core set of nodes interacts with the rest of the world, inducing
core-fringe structure. Our empirical results show that the
inclusion of additional fringe information leads to substantial variability in
prediction performance with common link prediction heuristics.
We observed cases where fringe information is (i) always harmful, (ii) always
beneficial, (iii) beneficial only \emph{up to} a certain amount of collection,
and (iv) beneficial only with \emph{enough} collection.

At first glance, this variability seems difficult to characterize. However, we
showed that these behaviors arise in some simple graph models---namely, the
stochastic block model and the one-dimensional small-world lattice model---by interpreting
the benefit of the fringe information as
changing the signal-to-noise ratio in our prediction problem. Our datasets are certainly more
complex than these models, but our analysis suggests that variability
in prediction performance when incorporating fringe data is much more plausible
than one might initially suspect.
Even when fringe data is available in network analysis, we must be careful how
we incorporate this data into the prediction models we build.

\subsection*{Data and software}
The call-Reality, text-Reality, email-Enron, and email-W3C datasets,
along with software for reproducing the results in this paper,
are available at \url{https://github.com/arbenson/cflp}.

\subsection*{Acknowledgements}
We thank David Liben-Nowell for access to
the LiveJournal data and Jure Leskovec for access to the email-Eu data.
This research was supported by
NSF Awards DMS-1830274, CCF-1740822, and SES-1741441;
a Simons Investigator Award;
and ARO Award W911NF-19-1-0057.

\bibliographystyle{ACM-Reference-Format}
\bibliography{refs}
\balance

\end{document}